\newtheorem{Definition}{Definition}
\newtheorem{Lemma}[Definition]{Lemma}
\newtheorem{Corollary}[Definition]{Corollary}
\newtheorem{Conjecture}[Definition]{Conjecture}
\newtheorem{Observation}[Definition]{Observation}
\crefname{table}{Tbl.}{Tbls.}
\crefname{appendix}{App.}{Apps.}
\crefname{section}{Sec.}{Secs.}
\crefname{Definition}{Def.}{Defs.}
\crefname{Lemma}{Lem.}{Lems.}
\crefname{Corollary}{Cor.}{Cors.}
\crefname{Conjecture}{Conj.}{Conjs.}
\DeclareMathOperator{\1}{\mathds{1}}
\DeclareMathOperator{\CC}{\mathcal{C}}
\DeclareMathOperator{\DD}{\mathcal{D}}
\DeclareMathOperator{\OO}{\mathcal{O}}
\DeclareMathOperator{\SSS}{\mathcal{S}}
\definecolor{myOrange}{RGB}{255,153,0} 
\definecolor{myBlue}{RGB}{3,3,133} 
\definecolor{myGreen}{RGB}{64 , 149 , 39} 
\definecolor{myPurple}{RGB}{97,21,67} 
\definecolor{myRed}{RGB}{175 , 24 , 33 } 
\definecolor{myYellow}{RGB}{250 , 183 , 0 } 
\definecolor{myPetrol}{RGB}{23,136, 168}
\newcommand{\node}[1]{\ensuremath{{#1}}}
\newcommand{\neighborhood}[1]{\ensuremath{\mathcal{N}_{#1}}}
\newcommand{\NN}[1]{\neighborhood{#1}}
\newcommand{\NNh}[1]{\ensuremath{\hat{\mathcal{N}}_{#1}}}
\newcommand{\stabdim}[1]{\ensuremath{d_{#1}}}
\newcommand{\fixeddimensionsset}[3]{\ensuremath{L_{#3}^{#1,#2}}}
\newcommand{\ranklist}[2]{\ensuremath{l_{#2}^{#1}}}
\newcommand{\ranktensor}[2]{\ensuremath{T_{#2}^{#1}}}
\newcommand{\tensoreig}[2]{\ensuremath{t_{#2}^{#1}}}
\DeclareMathOperator{\LC}{\text{LC}}
\DeclareMathOperator{\Id}{\1}
\def\blfootnote{\xdef\@thefnmark{}\@footnotetext}
\begin{document}

\title{Distinguishing Graph States by the Properties of Their Marginals}

\author{\orcidlink{0000-0003-2753-6027}~Lina Vandré{{\large\color{blue}$^*$}}}
\affiliation{Naturwissenschaftlich-Technische Fakult\"at, Universit\"at Siegen, Walter-Flex-Stra\ss e 3, 57068 Siegen, Germany}

\author{Jarn de Jong{{\large\color{blue}$^*$}}}
\affiliation{Electrical Engineering and Computer Science Department, Technische Universit{\"a}t Berlin, 10587 Berlin, Germany}

\author{\orcidlink{0000-0002-9349-4075}~Frederik Hahn}
\affiliation{Electrical Engineering and Computer Science Department, Technische Universit{\"a}t Berlin, 10587 Berlin, Germany}

\author{\orcidlink{0000-0003-0418-257X}~Adam Burchardt}
\affiliation{QuSoft, CWI and University of Amsterdam, Science Park 123, 1098 XG Amsterdam, the Netherlands}

\author{Otfried G\"uhne}
\affiliation{Naturwissenschaftlich-Technische Fakult\"at, Universit\"at Siegen, Walter-Flex-Stra\ss e 3, 57068 Siegen, Germany}

\author{Anna Pappa}
\affiliation{Electrical Engineering and Computer Science Department, Technische Universit{\"a}t Berlin, 10587 Berlin, Germany}

\date{\today}

\begin{abstract}
    \noindent Graph states are a class of multi-partite entangled quantum states that are ubiquitous in quantum information. We study equivalence relations between graph states under local unitaries (LU) to obtain distinguishing methods both in local and in networked settings.
    Based on the marginal structure of graph states, we introduce a family of easy-to-compute LU-invariants.
    We show that these invariants uniquely identify the entanglement classes of every graph state up to 8 qubits and discuss their reliability for larger numbers of qubits. 
    To handle larger graphs, we generalize tools to test for local Clifford (LC) equivalence of graph states that work by condensing large graphs into smaller graphs. 
    In turn, we show that statements on the equivalence of these smaller graphs (which are easier to compute) can be used to infer statements on the equivalence of the original, larger graphs.
    We analyze LU-equivalence in two key settings - with and without allowing for the permutation of qubits. 
    We identify entanglement classes, whose marginal structure does not allow us to distinguish them.
    As a result, we increase the bound on the number of qubits where the LU-LC conjecture holds from 8 to 10 qubits in the setting where qubit permutations are allowed.
\end{abstract}

\maketitle

\blfootnote{{{\large\color{blue}$^*$}}These authors contributed equally to this work.}

\section{Introduction}

Multipartite entanglement plays an important role in many applications of quantum 
information; it is used for example to generate secret keys between multiple parties 
in quantum networks~\cite{Murta_2020_quantum_conference} and as a resource for 
measurement based quantum computing \cite{Raussendorf2003measurement} and error 
correction \cite{Terhal2015errorcorrection} schemes. However, characterizing 
entanglement between more than three parties is not at all
straightforward~\cite{Horodecki2009entanglement,Guehne2009entanglement}. A fundamental 
challenge is that the size of the density matrix increases exponentially with 
the number of parties, making it difficult to simulate large entangled states 
on classical computers. 

Graph states~\cite{hein2006entanglement} are a subset of multipartite entangled 
states with useful properties. They have a convenient graphical representation, 
and, even though they are not as computationally hard to classically simulate as 
general entangled states, they are used in many of the above-mentioned applications. 
Graph states can be treated with respect to associated graphs; the association between graphs and graph states (\cref{fig:example_3marg}) will be analyzed later.

\begin{figure}
    \includegraphics[width = 0.95\linewidth]{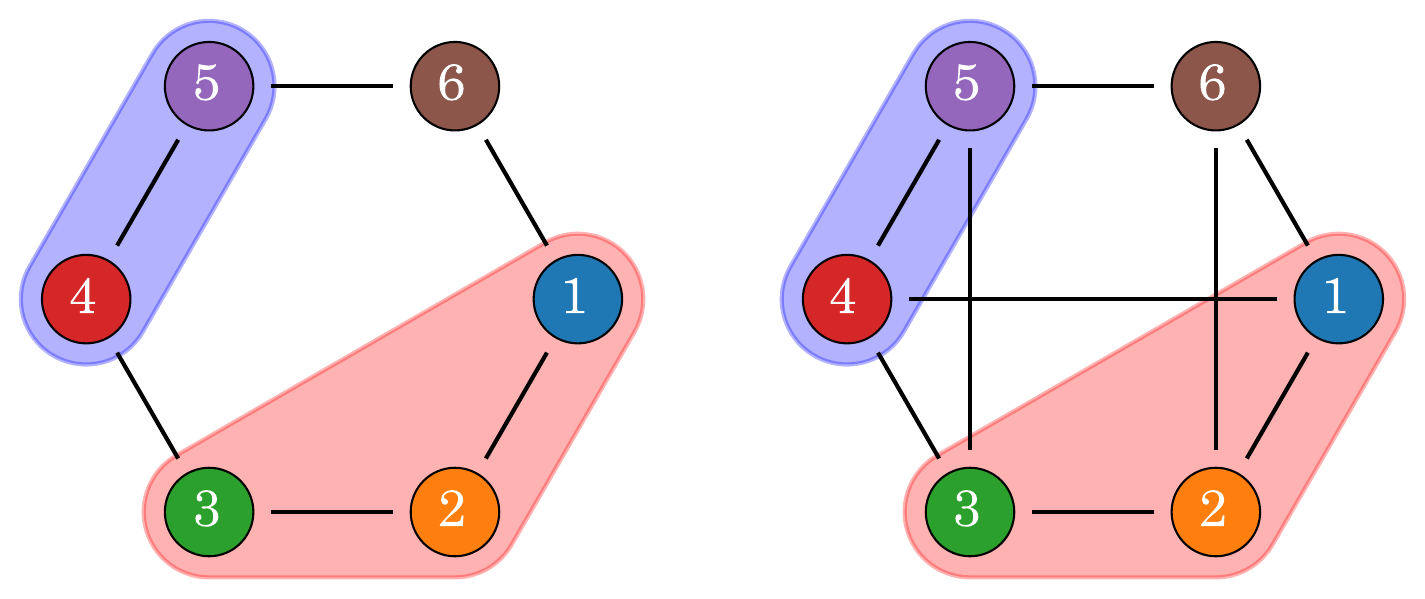}
    \caption{
    \emph{Graph states} are a certain kind of multipartite entangled states where qubits are represented by nodes of a graph, and the graphs' edges represent entangling gates.
    We are interested in finding out which graphs correspond to states that share the same entanglement properties, or more precise, if they are equivalent under local unitary operations. In our methods of determining equivalence an integer number $d$ can be assigned to every subset of nodes, which is necessarily the same if the corresponding states have the same entanglement properties. The subsets highlighted in purple have the same number $\stabdim{\{4,5\}}$ for both graphs in the figure, so no conclusion can be drawn yet. However the two graphs have different number $\stabdim{\{1,2,3\}}$ for the subsets highlighted in red. Thus, the two associated graph states cannot be equivalent under local unitary operations.
    }
    \label{fig:example_3marg}
\end{figure}

In general, the protocols that use entangled states implement specific operations 
that can be classified as either local or global. Local operations affect single 
qubits and are easier to perform. In contrast, global operations target multiple 
qubits. They are more involved and especially require them to be in close proximity to be feasible. In scenarios such 
as key generation, where participants are separated by distance, one is typically 
limited to local operations. 
In such remote settings, it is relevant to determine whether a given entangled state can be transformed into a different state by 
applying local operations, possibly assisted by classical communication, 
a setting commonly known as \emph{LOCC}, or \emph{SLOCC} when stochastic transformations are considered that succeed with a non-unit probability.
In a first approximation, one can consider merely the set of local
unitary (LU) operations, which do not make use of communication between the parties, neither quantum nor classical.
In scenarios such as quantum computing, which are carried out at one location, the set of operations may still be limited. Single qubit gates in general have a much better fidelity than multi qubit gates \cite{Majidy_Wilson_Laflamme_2024}. Furthermore, unitary operations on single qubits preserve the entanglement properties. In contrast to network scenarios, the permutations of qubits may not be of interest. In such scenarios, determining whether a given entangled state can be transformed into a different state up to qubit permutations by applying LU operations is of relevance. 

For checking equivalence of two pure multi-qubit states under LU operations a 
constructive algorithm exists, where LU-equivalence can be decided by solving a finite set of  equations \cite{Kraus2010_PRL}. A minimal number of polynomial equations  needed to decide LU-equivalence of $n$-qubit states is discussed in \cite{Maciazek2013howmany}.

Perhaps surprisingly, the equivalences of graph states under SLOCC and LU operations coincide \cite{hein2006entanglement}, so that only LU operations need to be considered in determining the equivalence of two graph states. Moreover, in many cases it is interesting to look at a specific subset of LU operations, known as local Clifford (LC) operations. These are the subset of LU transformations that map the set of multi-qubit Pauli eigenstates to itself and, more importantly, have a clear graphical representation \cite{Van_den_Nest_2004_graphical}, providing a powerful tool in the study of equivalence of graph states.

By applying LU or LC operations, 
a given graph state can be transformed into several other graph states. The collection of all the graph states that can be reached from a given starting state, with or without the permutation of qubits, is called the 
LU/LC-orbit of the starting state. Graph states in the same LU/LC-orbit are 
called \emph{LU/LC-equivalent}. It was once conjectured that LU- and LC-orbits with permutations for 
graph states coincide \cite{nestLocalUnitaryLocal2005}, but this conjecture
has been refuted by discovering counterexamples with 27 and more qubits~\cite{ji2010lulcconj,Tsimakuridze2017Graphstates}.

Nevertheless, for states 
with 8 or fewer qubits, LU- and LC-orbits with permutations do coincide~\cite{cabelloEntanglementEightqubitGraph2009}.
We increase this bound to 10. Further, we consider LU and LC operations without qubit permutations.
Here, we demonstrate that the LU and LC orbits for up to 8 qubits coincide.

Determining the orbit of a given graph state is a hard problem, even when 
restricting to just LC operations: even counting the number of graph states in an LC-orbit is 
shown to be \#{P}-complete \cite{Dahlberg2020Counting} and therefore at least as hard as an NP-complete problem. Still, an efficient polynomial algorithm capable of determining LC-equivalence between two input graphs was discovered by Bouchet~\cite{bouchet1993Recognizing} and popularized in the physics community in Ref.~\cite{vandennestEfficientAlgorithmRecognize2004}.

Despite the mentioned  general methods for demonstrating the LU-equivalence of generic states \cite{Kraus2010_PRL}, devising an efficient algorithm for this task remains an unresolved challenge. Earlier works \cite{Hein2004_7qubits,Hajdusek_2013} explored the connection between multipartite entanglement of graph states and specific graph-theoretical properties. Following these investigations, studies have been performed \cite{PhysRevA.102.022413,PhysRevA.106.062424} on how to distinguish some non-LU-equivalent states based on the entanglement properties of higher order marginals, while a recent paper~\cite{burchardt2023foliage}  provided polynomial invariants to negatively answer questions about both LC-and LU-equivalence in certain cases based on two-body marginal properties.

In this manuscript, we present a systematic approach to characterize LU-equivalence of graph states based on properties of their marginals. {We consider both cases, LU/LC-equivalence with and without qubit permutations. Our methods contribute to a better visual understanding of entanglement properties in graph states. Furthermore, they help to detect and exclude potential counterexamples of the LU-LC conjecture. Our methods scale polynomially in the number of nodes and linearly in the number of graphs we wish to compare.

The article is structured as follows. In \cref{sec:prelim} we introduce graphs, graph states, LU- and LC-equivalence as well as marginal states. In \cref{sec:tools} we explain tools that can be used to decide LU-equivalence and show that the LU-LC conjecture  is also true for graph states with up to 8 qubits without permutation. 
In \cref{sec:examples}  we discuss how to detect LU-inequivalence when our methods cannot be applied and increase the bound of the LU-LC conjecture with permutations to 10.
We continue by introducing graph simplification tools which are invariant under LC transformations in \cref{sec:lctools}.  In \cref{sec:complex} we discuss the computational equivalence of our tools and we conclude our work with a summary in \cref{sec:sum}.

\section{Preliminaries} \label{sec:prelim}

\subsection{Graphs and Metagraphs} 
\begin{Definition}
    A graph $G = (V,E)$ is a set of nodes $V$  and 
edges $E \subseteq V \times V$ connecting two nodes.
\end{Definition}
In this paper we only consider simple connected graphs. Graphs are connected if for any two nodes $i,j \in V$ we can find a path, i.e., a sequence of adjacent edges, that connects them. Graphs are simple if they contain neither self-loops of nodes nor multiple edges between two different nodes. Finally, we note that our methods easily generalise to disconnected graphs. 

Graphs can also be represented by their adjacency matrices; for a graph with $n = \vert V \vert$ nodes, this would be the symmetric $n \times n$ matrix $\Gamma_{G}$ that encodes the edges as
\begin{align}
\Gamma_{G}(i,j) = \begin{cases}
    1& \text{ if } (i,j) \in E, \\
    0& \text{ otherwise}.
\end{cases}
\end{align}
When context permits, the subscript will be omitted, and just $\Gamma$ will be written.
The $i$-th column $\eta_{i}$ of $\Gamma_{G}$ indicates node $i$'s neighbors, with $\eta_{i}(j) = 1$ if node $j$ is a neighbor, and $0$ otherwise. The \textit{neighborhood} \neighborhood{i} of a node $\node{i}$ is the collection of all other nodes of the graph that share an edge with it:
$$
\neighborhood{i} \coloneqq \{\node{j} \in V \mid (\node{i},\node{j}) \in E\}.
$$
In the same way, we define the neighborhood of a set $M\subsetneq V$ as the set of nodes adjacent to at least one node in $M$:
\begin{align}
    \NN{M} \coloneqq \lbrace v \in V \setminus M \mid \exists m \in M : (v,m) \in E   \rbrace.
\end{align}

In some cases, we are interested in the structure of a set $M \subset V$ of nodes 
and its neighborhood, but not in the entire graph $G$. For that purpose, we define a 
\emph{metagraph} $G_M$,
which depends on a given graph $G = (V,E)$ and a subset of its nodes, $M \subset V$. 
Such a metagraph consists of two types of nodes. Type-1 nodes are simply the nodes in $M$. 
For every non-empty subset $M'\subseteq M$, there exists a type-2 node labeled by 
that subset between square brackets, i.e.,\ $[M']$. 
The rules for connecting the nodes of the metagraph are as follows: Type-1 
nodes are connected to each other if and only if they are connected in the original 
graph $G$. A type-2 node with label $[M']$ is connected to all the type-1 nodes 
in $M$ that appear in their label $[M']$ if and only if the following condition
is met: There exists at least one node $v$ in the original graph $G$, but outside
of $M$ such that 
the intersection of its neighborhood with $M$ equals $M'$.  If there is no $v$ with this property, 
the type-2 node $[M']$ remains isolated.

Note that there are never edges between two type-2 nodes in a metagraph. 
Finally, we are also ignorant as to how many nodes in $V \setminus M$ of the original graph are
adjacent to nodes in $M$.
 We formalize this definition as follows, where we denote with $\mathcal{P}(M)$ the power-set of $M$, i.e. the set $\mathcal{P}(M)=\{N: N\subseteq M \}$ of all subsets of $M$. 

\begin{Definition}[Metagraph] 
\label{def:metagraph}
    For a graph $G = (V,E)$ and a set $M \subsetneq V$, the metagraph of $M$ is defined 
    as the graph $G_M = (M \cup [\mathcal{P}(M) \setminus \emptyset], E_M)$, where each 
    node $v$ of $G_M$ is of one of two types: either $v\in M$ or $v\in \mathcal{P}(M) 
    \setminus \emptyset$. The edge set $E_M$ contains all edges of the induced subgraph 
    on $M$. 
    Furthermore, if in the initial graph $G$ there exists a node $v'\in V\setminus M$ such that $\mathcal{N}_{v'} \cap M=w$, the node $w\in \mathcal{P}(M) \setminus \emptyset$ is connected to all nodes $v\in w$. Otherwise, the node $w \in \mathcal{P}(M) \setminus \emptyset$ remains isolated.
\end{Definition}

\begin{figure}[t]
    \includegraphics[width = 0.45\textwidth]{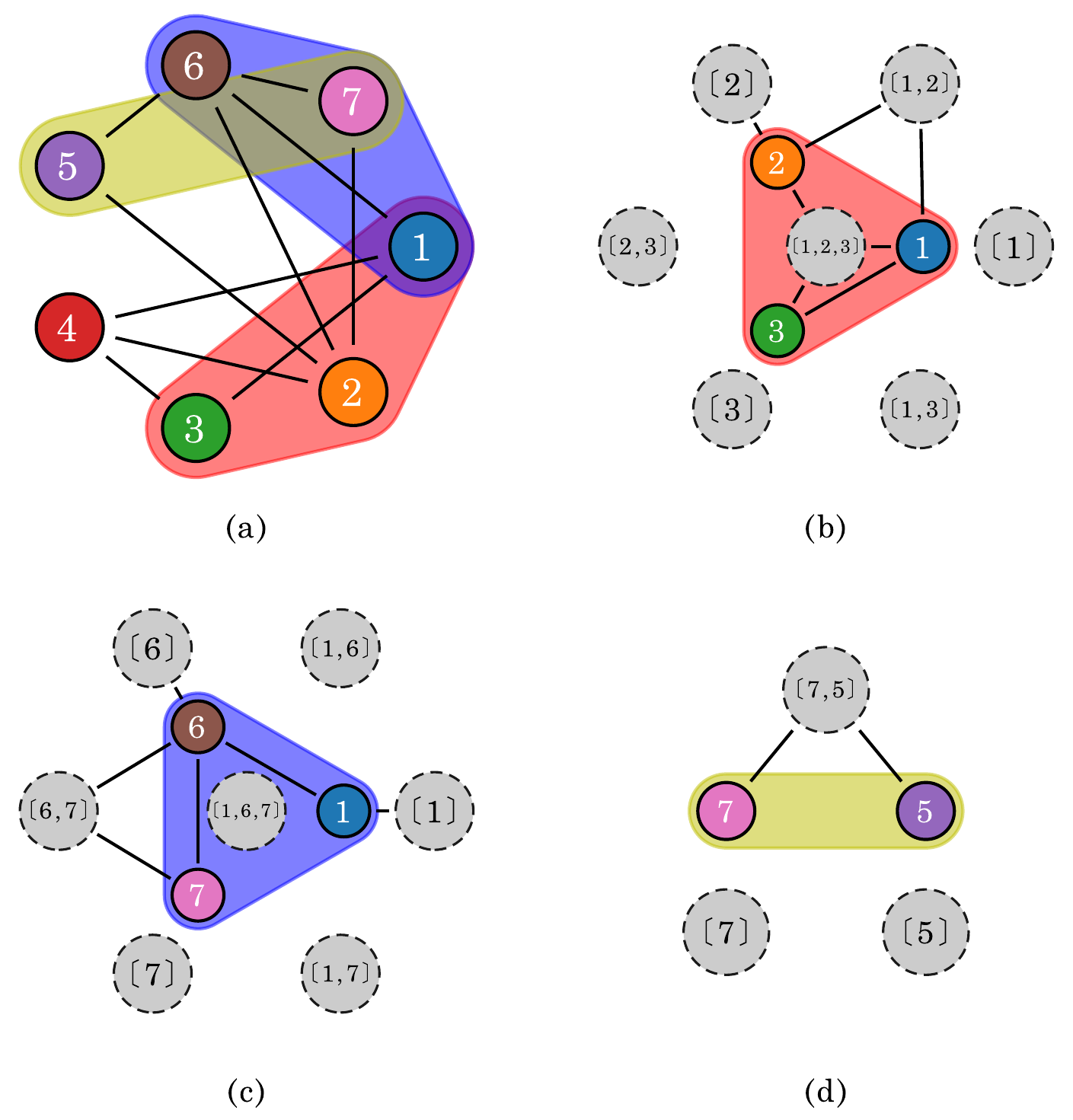}
    \caption{\textbf{$(a)$} A graph $G = (V,E)$ with $3$ different highlighted sets $M$, of size $2$ and $3$. \textbf{$(b),(c),(d)$} The  metagraphs that rise from the different highlighted sets.
    }
    \label{fig:example_metagraphs}
\end{figure}
Let us consider the explicit example in \cref{fig:example_metagraphs}, and specifically the upper right (red) metagraph (b). The metagraph $G_{\lbrace 1,2,3 \rbrace}$ has type-1 nodes $M = \lbrace 1,2,3 \rbrace$, as well as type-2 nodes $\mathcal{P}(M) \setminus \emptyset = \lbrace \lbrace 1 \rbrace, \lbrace 1,2 \rbrace, \lbrace 1,2,3 \rbrace, \dots \rbrace$. Between nodes in $M$, the metagraph has the same edges as $G$. Edges between nodes in $M$ and  $\{ \{ 2 \}, \{ 1,2 \}, \{1,2,3  \}  \}$ can be explained as follows:
There is an edge between node $2 \in M$ and node $\{ 2 \} \in \mathcal{P}(M)$, since for  node $5 \in V$ of the original graph, we have $\mathcal{N}_{5} \cap M=\{ 2 \}$. We further find edges between nodes $1, 2 \in M$ and node $\{1, 2 \} \in \mathcal{P}(M)$, since for  node $6 \in V$ of the original graph, we have $\mathcal{N}_{6} \cap M=\{1, 2 \}$. Since $\mathcal{N}_{4} \cap M=\{1, 2, 3 \}$, we also have edges between nodes $1, 2, 3 \in M$ and node $\{1, 2, 3 \} \in \mathcal{P}(M)$.

\subsection{Graph States}

Graph states are quantum states composed of multiple qubits and related to a graph $G = (V,E)$, where nodes and edges represent qubits and entangling gates, respectively.
\begin{Definition}
   The graph state corresponding to a graph $G = (V,E)$ is defined as
\begin{align}
    \ket{G} \coloneqq \prod_{e \in E} \text{CZ}_e \ket{+}^{\otimes \lvert V \rvert }, 
    \label{eq:grstate}
\end{align}
where $\text{CZ}_e$ is a CZ gate, acting on qubits in the 
edge $e$.
\end{Definition}

There is an equivalent definition using stabilizer operators. For 
each node $i \in V$ we can define a generator, 
\begin{align}\label{eq:grstategen}
g_i \coloneqq X_i \bigotimes_{j \in \neighborhood{i}} Z_j,
\end{align}
where $X_i$, $Y_i$, and $Z_i$ are the Pauli matrices acting on the $i$-th qubit and $\neighborhood{i}$ is the neighborhood of $i$. The products of the generators $g_i$ form the \textit{stabilizer}  $\SSS(G)$, a commutative group with $2^n $ elements. The graph state $\ket{G}$ is then defined as the unique $n$-qubit $+1$ eigenstate of all generators $g_i$ or simply by 
\begin{align}
    \dyad{G} = \frac{1}{2^n} \sum_{s \in \SSS(G)} s. \label{eq:graph_stabilizer}
\end{align}

\begin{figure}[t]
    \includegraphics[width = 0.45\textwidth]{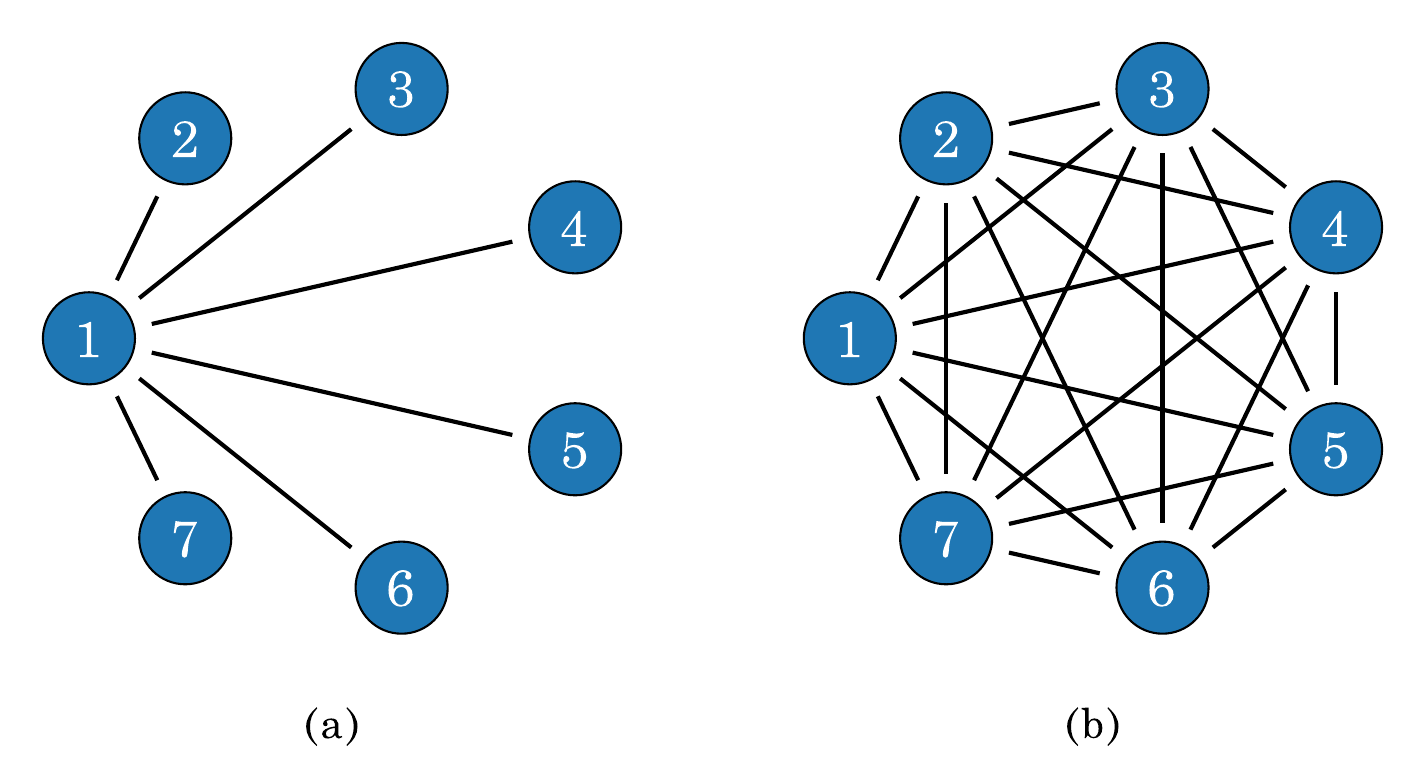}
    \caption{\textbf{$(a)$} The \textit{star} graph, whose corresponding graph state is equivalent to the $\ket{\mathrm{GHZ}}$ state. \textbf{$(b)$} Local operations on the qubits of the $\mathrm{GHZ}$ state result in a different (albeit locally equivalent) graph state, which is represented by the complete graph.
    }
    \label{fig:example_starvscomplete}
\end{figure}

A prominent example of a graph state is the Greenberger-Horne-Zeilinger (GHZ) state 
\cite{GHZ}, used in quantum communication protocols to establish secret keys \cite{proietti2021experimental,epping2017multi}, share quantum secrets \cite{Hillery1999Qsecret}, and communicate anonymously \cite{ACKA,expACKA}. An $n$-qubit GHZ state can be represented by a fully connected graph or a star graph of $n$ nodes, as shown in \cref{fig:example_starvscomplete}.
Another noteworthy example of graph states are cluster states; these are  represented by grid-like graphs and play an important role in measurement-based quantum computing \cite{Raussendorf2001,Nielsen_2006_cluster_QC}.

In this paper we consider labeled and unlabeled graphs. 
We use unlabeled graphs when all permutations of the vertices are considered equivalent.
Labeled graphs are used when we are interested in a specific assignment of qubit indices  where index permutations in general result in a different state.

\subsection{Equivalence under Local  Operations}

In this section, we first define the problem of LU-equivalence. We then discuss a well-known subset of local unitaries, the local Clifford (LC) unitaries, and the equivalence of graph states under such LC operations. Finally, we emphasize the distinction between LU/LC-equivalence of graph states with and without allowing for permutation of qubits, corresponding to unlabeled and labeled graphs, respectively.

Graph states naturally arise in network scenarios, where each node in the network represents a single qubit. Since the nodes are usually spatially distant from each other, studying the effect of \textit{local} operations on graph states is of practical interest.

\subsubsection{LU-Equivalence}

Local operations can transform one graph state into another, which leads to the question whether two graph states are equal up to local operations.
More specifically, two graph states $\ket{G_{1}}$ and $\ket{G_{2}}$ are \textit{local unitary} or LU-equivalent if and only if $\ket{G_{1}} = U\ket{G_{2}}$ and the $n$-qubit unitary operator $U \in \mathcal{U}(2^{n})$ allows a decomposition into a tensor product of single-qubit operations
\begin{equation}
    U = \bigotimes_{i=1}^{n}U_{i} \quad \mid U_{i} \in U(2).
\end{equation}

For example, the graph states corresponding to the star graph and the fully connected graph are local unitary equivalent to the GHZ state in its standard form. For $n$-qubits this would be:
\begin{align}
    \ket{\mathrm{GHZ}_n} = \frac{1}{\sqrt{2}} \left( \ket{0_1 0_2 \dots 0_n} + \ket{1_11_2 \dots 1_n}  \right).
\end{align}
For the $n$-node star graph with the first node as central node (as shown in \cref{fig:example_starvscomplete}), one finds:
\begin{align}
    \ket{G_{\text{star},n}} 
    &= \frac{1}{\sqrt{2}} \left( \ket{0_1 +_2 \dots +_n} + \ket{1_1 -_2 \dots -_n} \right). 
\end{align}
Applying local, unitary Hadamard gates on all qubits except the first, we get 
$\ket{\mathrm{GHZ}_n}$. In a similar way, the equivalence with the fully connected graph can also be shown.

In general, it is hard to decide whether two graph states are LU-equivalent. There exists a constructive algorithm, where LU-equivalence of general states can be decided by solving a finite set of  equations \cite{Kraus2010_PRL}. However, the number of polynomials grows exponentially with the number of qubits $n$ \cite{Maciazek2013howmany}. It is therefore of interest to find efficient tools that can decide LU-equivalence. In this paper we discuss a necessary but not sufficient method to determine LU-equivalence of graph states.

\subsubsection{LC-Equivalence and Local Complementation} \label{sec:LCequivalence}

From the local unitaries we can turn our attention to one of their subsets. The 
\textit{Clifford group} is the normaliser of the $n$-qubit Pauli group $\mathcal{P}_{n}$ 
in the unitary group $\mathcal{U}(2^{n})$:
\begin{equation}
    \mathcal{C}_{n} = \{C \in \mathcal{U}(2^{n}) | C \mathcal{P}_{n} C^{\dagger} = \mathcal{P}_{n}\}.
\end{equation}
In practice, this means that Clifford unitaries map strings of Pauli matrices to strings 
of Pauli matrices.

The Clifford matrices that decompose into single-qubit tensor products form the 
\textit{local Clifford group} $\mathcal{C}_{n}^{l}$. Two graph states $\ket{G_{1}}$ 
and $\ket{G_{2}}$ are \textit{local Clifford} or LC-equivalent if and only if $\ket{G_{1}} = C\ket{G_{2}}$ 
for some $C \in \mathcal{C}_{n}^{l}$. 
The local Clifford group 
has $24$ elements and is generated by the Hadamard gate $H$, the phase gate $S$, and the Pauli matrix $X$. As a group with finite number of elements, it is a proper subgroup of the group of local unitaries.
Remarkably, there is a simple relationship between local Clifford operations on graph states and so-called \textit{local complementations} on their associated graphs \cite{Van_den_Nest_2004_graphical}.

 \begin{Definition}\label{def:local_complementation} 
 For each node $i \in V$ of a graph $G=(V, E)$, we can define a \textit{locally complemented graph} $LC_{i}(G)$ with adjacency matrix
	$
	\Gamma_{LC_{i}(G)} \coloneqq \Gamma_{G}+\Gamma_{{\neighborhood{i}}} \ (\bmod \ 2),
	$
	where 
  \begin{align}
    \Gamma_{{\neighborhood{i}}}(j,k) = \begin{cases}
        1& \text{ if } j,k \in \neighborhood{i}, j \neq k \\
        0& \text{ otherwise}.
    \end{cases}
  \end{align}
	is the adjacency matrix of the complete graph on the neighborhood $\neighborhood{i}$ of $i$ and empty on all other nodes and $\Gamma_{G}$.
\end{Definition}

In other words, local complementation `flips' the adjacency relations between the neighbors of $i$: if two nodes in \neighborhood{i} are connected by an edge in $G$, they are not connected in $\LC_{i}(G)$, and vice versa. Each $\LC_{i}$ operation only affects the edges in the neighborhood of $\node{i}$; all other edges remain unchanged.

There is a one-to-one correspondence between the local complementation orbit of a given graph and the orbit under local Clifford operations of the corresponding graph state \cite{Van_den_Nest_2004_graphical}.
 The explicit local Clifford operation $\LC_{i} \in \mathcal{C}_{n}^{l}$ corresponding to a local complementation is given by 
\begin{equation}
     \LC_i = \sqrt{-i X_i} \bigotimes_{j \in \NN i} \sqrt{i Z_j}.
\end{equation}
{An example of local complementation is shown in \cref{fig:example_starvscomplete}. If we perform local complementation on the central node of the star graph, all other nodes get pairwise connected, which results in the fully connected graph. Local complementation in, for example, node 2 of the star graph does not change the graph. Local complementation on any node $i$ of the fully connected graph leads to a star graph with node $i$ as the center. We previously showed that the star graph state is LU-equivalent to the GHZ state in its common form. Applying a local complementation shows that the fully connected graph state is LC- (and therefore LU-) equivalent to the GHZ state as well.}

Note that the effect of an arbitrary local Clifford on a graph state does not have to result 
in a graph state again. In general, the effect of local Cliffords on graph states will result 
in different stabilizer states. 

Fortunately, every stabilizer state is itself LC-equivalent to some graph state, and such graph states can be found efficiently \cite{Van_den_Nest_2004_graphical}. 
Since this relation is not unique, we say that a stabilizer state is LC-equivalent to a graph state orbit. 

Combining this insight with the efficient algorithm for determining whether two graphs are in each other's local complementation orbit \cite{bouchet1993Recognizing} results in an efficient method for determining whether two stabilizer states are LC-equivalent \cite{vandennestEfficientAlgorithmRecognize2004}.

It turns out that for unlabeled graphs of up to 8 qubits, LU-equivalent graph states are also LC-equivalent \cite{Hein2004_7qubits,cabelloEntanglementEightqubitGraph2009}. In this regime it therefore is sufficient to use the efficient methods of LC-equivalence checking to decide the in principle hard question of LU-equivalence. The hope that the same applies to graph states with an arbitrary number of qubits was formulated as the LU-LC conjecture \cite{nestLocalUnitaryLocal2005}. However, the conjecture was falsified by finding counterexamples of $27$ node graphs \cite{ji2010lulcconj,Tsimakuridze2017Graphstates}. Therefore, different methods for deciding LU- and LC-equivalence of graph states are needed.

\subsubsection{Distinction Between Local Equivalence of Labeled and Unlabeled Graphs
}

As mentioned earlier, we differentiate between labeled and unlabeled graphs.

On the one hand, our definition of LU- and LC-equivalence is from a network point of view, where a labeling of the nodes 
of the graph corresponds to the qubit locations of the respective graph state.

On the other hand, previous work in entanglement theory mostly considered unlabeled graphs \cite{cabelloEntanglementEightqubitGraph2009,nestLocalUnitaryLocal2005, danielsen2005database12qubits}, where it is not relevant whether certain qubits of a graph state are at a fixed location and have a corresponding label.
In other words, comparing unlabeled graph states under local operations amounts to studying so-called entanglement classes. Two graph states belong to the same entanglement class if there exists an arbitrary permutation of the qubits of the first graph state that results in a graph state that is in the (LU- or LC-) orbit of the second graph state.

\begin{figure}
    \centering
    \includegraphics[width=\linewidth]{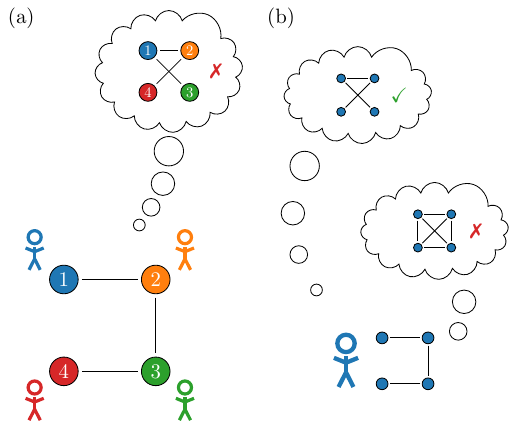}
    \caption{Examples of labeled and unlabeled graphs of four nodes.
    \textbf{$(a)$} In a network scenario, the qubits of a state are distributed among distant parties and particles cannot be exchanged. Due to the labeling, the shown graphs are different and the corresponding graph states cannot be transformed into each other by applying local operations.
    \textbf{$(b)$} If the state is not distributed, it is easy to relabel qubits and it is interesting whether two states share the same entanglement properties. The two graphs which both have 3 edges are equal. It is interesting to know whether the third graph is LU- or LC-equivalent to the others for some labeling. 
    }
    \label{fig:labelledvsunlabelled}
\end{figure}

In this article, we discuss both cases.
An example that illustrates why both cases should be considered separately is shown in \cref{fig:labelledvsunlabelled}.

As another example of the difference between the two cases, consider the 6 unlabeled 4 node connected graphs, and the 38 labeled 4 node connected graphs. They can either be (in the case of unlabeled graphs) considered as two separate entanglement classes, or (in the case of labeled graphs) as four separate LU-orbits. One of these entanglement classes and one of these orbits are those associated with the GHZ state. The other entanglement class and three LU-orbits are associated with the linear cluster state and are detailed in \cref{app:labelvsunlabel}.

\subsection{Marginal States}
In the following section we discuss how  structural properties of graphs are related to properties  of reduced density matrices (i.e.\ marginals) of graph states,

\begin{Definition}
    The marginal state $\rho_M$ on a proper subset of nodes $M \subsetneq V$ is defined as
    \begin{align}
    \label{eq:marginal}
        \rho_{M}(G) \coloneqq \Tr_{V \setminus M}(\dyad{G}) = \frac{1}{2^n} \sum_{S \in \SSS(G)} \Tr_{V \setminus M}(S),
    \end{align}
where $\Tr_{V \setminus M}$ is the trace over all systems in $V \setminus M$. 
\end{Definition}

Since Pauli matrices are traceless, for sufficiently small $M$ we have $\Tr_{V \setminus M}(S) = 0$ in many cases. There are two special cases where the partial trace of a stabilizer element is not zero. Firstly, for the trivial stabilizer element $\1_V \in \SSS(G)$ we have $\Tr_{V \setminus M}(\1_V) = 2^{\vert V \setminus M \vert} \1_M$, where the index of $\1$ indicates the size of the identity matrix. Secondly, depending on the choice of $M$, there may be stabilizer elements whose non-trivial supports are contained in $M$, i.e.~all indices of Pauli matrices are contained in $M$. In this case, the partial trace of the stabilizer element is also different from zero.

The \textit{reduced stabilizer} is the subgroup $\SSS_{M} \subseteq \SSS$ defined
\begin{align}
    \SSS_M &\coloneqq \lbrace S \in \SSS(G) : \Tr_{V \setminus M}(S) \neq 0 \rbrace \label{eq:stab_set_tr} \\
    &= \lbrace S \in \SSS(G) : \mathrm{supp}(S) \subset M \rbrace, \label{eq:stab_set_supp}
\end{align}
where $\mathrm{supp}(S)$, the \textit{support} of a Pauli operator $S$, is the collection of tensor-subspaces on which it acts non-trivially. 
Directly from the definition, it follows that $\SSS_M$ forms a group. Every element of $\SSS_{M}$ squares to the identity and commutes with any other element, and therefore the order of the group $\SSS_M$ is a power of two. Consequently, we can define the following integer number
\begin{equation}
\label{eq:d_M}
\stabdim{M} := \log_2(\vert \SSS_M \vert),
\end{equation}
as the rank of the group $\SSS_{M}$, i.e.,~the minimum number of elements in a generating set of $\SSS_{M}$.

$\SSS_{M}$ is an abelian subgroup of $\SSS$ with at most $2^{\vert M \vert}$ elements. Therefore, when $M$ is a proper subset of $V$,  we have the equation
\begin{equation}
    0 \leqslant \stabdim{M} < \abs{M},
\end{equation}
where the last inequality is strict because we only consider connected graphs. This also allows for a straightforward check of connectedness: $\stabdim{M} = \abs{M}$ if and only if a graph is disconnected over the partition $M:M'$.

Notice that the group $\SSS_{M} $ defines the respective marginal state $\rho_{M}(G)$. Indeed, by \cref{eq:marginal}, we have
\begin{equation}
\label{eq:AdNew}
        \rho_{M}(G)=\frac{1}{2^{n}}
        \sum_{S\in \SSS_{M}} \Tr_{V \setminus M}(S).
\end{equation}
Note that the normalization factor is $2^{n}$ instead of $2^{\abs{M}}$ because $\Tr_{V \setminus M}(S)$ is only an $\abs{M}$-qubit Pauli operator up to a factor $2^{n - \abs{M}}$.

\begin{figure}
    \centering
    \includegraphics[width = 0.3\textwidth]{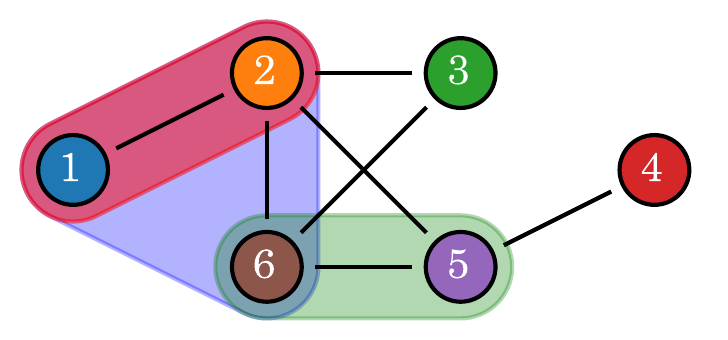} 
    \caption{Example for marginal states on a six node graph. State $\rho_{\lbrace 1,2 \rbrace}$ is stabilized by $\SSS_{\lbrace 1,2 \rbrace} = \lbrace \1, g_1  \rbrace$, while $\rho_{\lbrace 1,2, 6 \rbrace}$ is stabilized by $\SSS_{\lbrace 1,2,6 \rbrace} = \lbrace \1, g_1, g_2 g_6, g_1 g_2 g_6 \rbrace$ and $\rho_{\lbrace 5,6 \rbrace}$ is stabilized by $\SSS_{\lbrace 5,6 \rbrace} = \lbrace \1 \rbrace$. The stabilizer elements are given by $g_1 = X_1 Z_2$, $g_2 g_6 = - Z_1 Y_2 Y_6$, and $g_1 g_2 g_6 =  Y_1 X_2 Y_6$}
    \label{fig:example_marginals}
\end{figure}

\noindent
An example for marginal states is given in \cref{fig:example_marginals}.
Computing one- and two-body marginals of graph states is discussed in details in Ref.~\cite{Gittsovich2010Multiparticle}.

\section{Tools for characterizing LU-equivalence} \label{sec:tools}

We are interested in which graph states can or cannot be transformed into other graph states using local unitary operations. LU-equivalent graph states have in general different reduced states, but certain properties stay invariant under local unitary operations. For example the number of stabilizer elements in $\vert \SSS_M \vert$ or the rank of the reduced state are invariant under LU transformations. In the following we utilize these invariant properties to distinguish between graph states
that are not LU-equivalent. We begin by outlining alternative methods for computing the stabilizer dimension, $d_M$.

The stabilizer dimension can be computed from the rank of the marginal state.
\cref{eq:AdNew} allows us to compute the rank of a marginal state $\rho_{M}$. Indeed, the group structure of the reduced stabilizer implies that $\rho_{M}^{2} = \frac{2^{\stabdim{M}}}{2^{\abs{M}}}\rho_{M}$. It follows that $\rho_{M}$ is a projection operator $\Pi_{M}$ up to a scaling factor of $2^{\stabdim{M} - \abs{M}}$: $\rho_{M} = 2^{\stabdim{M} - \abs{M}} \Pi_{M}$. Note that $\rank(\rho_{M}) = \rank(\Pi_{M}) = \Tr[\Pi_{M}]$, where the last equality holds because $\Pi_{M}$ is a projection operator. Combining these insights, it follows that:
\begin{equation}\label{eq:rankofrhoMcalc}
    1 = \Tr[\rho_{M}] = 2^{\stabdim{M} - \abs{M}}\rank(\rho_{M}).
\end{equation}
This allows us to compute $\rank(\rho_{M})$ in terms of the marginal dimension $\stabdim{M}$:
\begin{align}
   \rank(\rho_M) = 2^{\vert M \vert - \stabdim{M}}.
\end{align}

 The rank of a reduced density matrix is invariant under LU operations, 
 thereby ensuring that the dimension $d_M$ also is invariant under such operations.

Sometimes it will be useful to calculate the marginal dimension \stabdim{M} from the adjacency matrix $\Gamma$. Without loss of generality, we assume that the nodes in $M$ are the first $\abs{M}$ indices of $\Gamma$; the rest is indexed by $M^{\perp} = V \setminus M$. We can then write the adjacency matrix in block form:
\begin{equation}
    \Gamma = \begin{bNiceArray}{c|c}
  \Gamma_{M,M} & \Gamma_{M,M^{\perp}} \\
  \hline
  \Gamma_{M^{\perp},M} & \Gamma_{M^{\perp},M^{\perp}} 
\end{bNiceArray},
\end{equation}
where $\Gamma_{A,B} $ denotes the submatrix of the adjacency matrix $\Gamma$ with rows and columns indexed by subsets $A$ and
$B$ respectively. 
We can calculate the dimension of \stabdim{M} as follows \cite{hein2006entanglement}:
\begin{equation}
    \stabdim{M} = \mathrm{Null}\left(\Gamma_{M^{\perp}, M}\right) = \abs{M} - \text{rank}(\Gamma_{M,M^{\perp}}),
\end{equation}
where $\mathrm{Null}(A)$ denotes the nullity of a matrix $A$, i.e.~the dimension of its kernel, calculated over the binary field. 

In order to determine the dimension $d_M$, we can also use the adjacency matrix $\Gamma^{G_M}$ of the corresponding metagraph $G_M$. Indeed, we have
\begin{align}
    \text{rank}(\Gamma_{M,M^{\perp}})  = \text{rank}\, \Gamma^{G_M}_{M,\mathcal{P}(M)}
\end{align}
where $\Gamma^{G_M}_{M,\mathcal{P}(M)}$ denotes the submatrix of the adjacency matrix $\Gamma^{G_M}$ with rows and columns indexed by nodes in $M$ and columns indexed by subsets of $M$. This can be easily seen, as the matrix $\Gamma^{G_M}_{M,\mathcal{P}(M)}$ is obtained from $\Gamma_{M,M^{\perp}} $ by deleting repetitive columns, see \cref{def:metagraph}.

We conclude this section with outlining the following relation between the marginal dimension \stabdim{M} and the entanglement entropy $E_M (\ket{G})$ with respect to the bipartition $M \vert M^{\perp}$:
\begin{align}
    E_M (\ket{G}) \coloneqq - \Tr(\rho_M \log_2 \rho_M) = \vert M \vert -   \stabdim{M}.
\end{align}

From the fact that $E_M (\ket{G})$ = $E_{M^{\perp}} (\ket{G})$ we get the equality
\begin{equation}\label{eq:d_M=d_m-k}
    \stabdim{M^{\perp}} = \stabdim{M} + \lvert M^{\perp} \rvert - \lvert M \rvert.
\end{equation}

\subsection{Characterizing LU-orbits by Their Rank Invariants} 
\label{sec:rankinvar}
In this section, we discuss how we can use the mentioned invariant measures to distinguish between different LU-orbits of graph states. 
For a given graph $G=(V,E)$, we denote by
\begin{equation}
    \fixeddimensionsset{k}{G}{i} = \{M \subseteq V \mid |M|=k,\, d_M= i\}
\end{equation}
the collection of all subsets $M\subseteq V$ of nodes of size $|M|=k$ such that the corresponding marginal $\rho_M$ is of dimension $d_M=i$. Furthermore, for a given value $k$, we define
\begin{equation}
    \ranklist{G}{k} = \Big(\abs{\fixeddimensionsset{k}{G}{0}}, \dots ,\abs{\fixeddimensionsset{k}{G}{k}}\Big),
\end{equation}
which is a vector of dimensions of sets $\fixeddimensionsset{k}{G}{i}$ for all values $0\leq i\leq k$. Moreover, for any graph $G=(V,E)$ and number $k$, $0\leq k\leq n$, we define the $k$-dimensional tensor 
\begin{equation}\label{def:ranktensor}
    \ranktensor{G}{k} =\big( t_{i_1 \cdots i_k}\big)_{i_1,\ldots,i_k\in V}
\end{equation}
with entries determined by the dimensions of the corresponding marginals, i.e.:
\begin{equation}
    t_{i_1 \cdots i_k} := d_{\{i_1,\ldots,i_k\}}.
\end{equation}
Notice that elements $i_1,\ldots,i_k$ are not necessarily pairwise different, hence the set $M=\{i_1,\ldots,i_k\}$ might be of any size $1\leq |M|\leq k$. Therefore, \ranktensor{G}{k} contains the information about the size of dimensions of all marginal states $\rho_M$ corresponding to the subsets $M\subseteq V$ of size $|M|\leq k$. 
Notice that for any $k$, the tensor \ranktensor{G}{k-1} is embedded into the (generalized) diagonals of \ranktensor{G}{k}, and that \ranktensor{G}{k} is super-symmetric, i.e. $t_{i_1 \cdots i_k}=t_{\sigma (i_1 ) \cdots \sigma (i_k )}$ for any arbitrary permutation $\sigma$. 
Moreover, because of \cref{eq:d_M=d_m-k}, any $\ranktensor{G}{k}$ for $k > \lfloor \frac{n}{2} \rfloor$ can be directly computed from $\ranktensor{G}{n - k}$.

For distinguishing LU-orbits, we have the following lemma:

\begin{Lemma} \label{lem:samerank}
    Consider two labeled graphs $G=(V,E)$ and $G'=(V,E')$ defined on the same node set $V$. If the corresponding graph states $\ket{G}$ and $\ket{G'}$ are LU-equivalent, we have
    \begin{equation}
        \fixeddimensionsset{k}{G}{i} = \fixeddimensionsset{k}{G'}{i}
    \end{equation}
    for all marginal sizes $k \in \{0, \dots, n \}$ and marginal dimensions $i \in \{0, \dots, k - 1\}$. Similarly, we have
    \begin{equation}
        \ranktensor{G}{k} = \ranktensor{G'}{k}.
    \end{equation}
\end{Lemma}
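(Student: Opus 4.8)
The plan is to show that all the objects in the statement are functions only of the \emph{list of marginal ranks} $\{(M, d_M)\}_{M\subsetneq V}$, and that this list is an LU-invariant. The second part is already essentially established in the excerpt: for a fixed labelled node set $V$ and fixed subset $M$, the rank of the reduced state obeys $\rank(\rho_M(G)) = 2^{|M|-\stabdim{M}}$, and matrix rank of a reduced density matrix is invariant under local unitaries. Hence if $\ket{G}$ and $\ket{G'}$ are LU-equivalent via $U=\bigotimes_i U_i$, then $\rho_M(G') = (\bigotimes_{i\in M} U_i)\,\rho_M(G)\,(\bigotimes_{i\in M} U_i)^\dagger$, so $\rank(\rho_M(G'))=\rank(\rho_M(G))$, and therefore $\stabdim{M}=d_M(G)=d_M(G')$ for every proper subset $M\subsetneq V$. (For $M=V$ one has the trivial pure-state case; and note the labelled setting is what makes $U$ act blockwise on exactly the qubits of $M$ — there is no permutation to track.)

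Given that $d_M(G)=d_M(G')$ for all $M$, the remaining work is purely bookkeeping: each of the claimed equalities is manifestly a function of the map $M\mapsto d_M$. First I would argue the $L$-sets: by definition $\fixeddimensionsset{k}{G}{i}=\{M\subseteq V : |M|=k,\ d_M=i\}$, and since membership of a given $M$ depends only on $|M|$ and $d_M(G)$, equality of the dimension function on all subsets forces $\fixeddimensionsset{k}{G}{i}=\fixeddimensionsset{k}{G'}{i}$ as \emph{sets} (not merely equinumerous), for every $k$ and every $i$; in particular $\ranklist{G}{k}=\ranklist{G'}{k}$ componentwise. Then for the tensor: by \eqref{def:ranktensor} the entry $t_{i_1\cdots i_k}$ is defined as $d_{\{i_1,\dots,i_k\}}$, which is a value of the same dimension function evaluated on the set $M=\{i_1,\dots,i_k\}$ (of size at most $k$, possibly with repeats among the indices), so entrywise $\ranktensor{G}{k}=\ranktensor{G'}{k}$.

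One small point worth handling carefully: the lemma states the $L$-set equality for marginal dimensions $i\in\{0,\dots,k-1\}$ (consistent with the strict bound $d_M<|M|$ for connected graphs and proper $M$), while the tensor ranges over $i_1,\dots,i_k\in V$ so that the set $M=\{i_1,\dots,i_k\}$ may equal $V$ when $k\geq n$; in that corner case $\rho_V=\dyad{G}$ is pure, $d_V$ is fixed by $|V|$ alone, and LU-invariance is immediate. I would also note explicitly that the argument does not use any graph-theoretic structure beyond what is packaged into $d_M$ — in particular none of the metagraph or adjacency-matrix formulas are needed here; they are only alternative ways to \emph{compute} $d_M$.

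Honestly there is no real obstacle: the content of the lemma is the invariance of $\rank(\rho_M)$ under LU (already quoted) plus the observation that all listed quantities are explicit functions of $\{d_M\}$. The only thing to be careful about is phrasing the conclusion as set equality rather than just equal cardinalities, and being precise that we are in the labelled setting so that the local unitary splits across the bipartition $M\,|\,M^\perp$ with no permutation bookkeeping; the entanglement-class (unlabelled) version would instead require quantifying over permutations of $V$, which is deliberately not what this lemma claims.
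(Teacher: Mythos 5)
Your proposal is correct and follows the same route the paper takes: the paper establishes immediately before the lemma that $\rank(\rho_M)=2^{|M|-\stabdim{M}}$ is LU-invariant, hence $\stabdim{M}$ is, and treats the lemma as a direct consequence since $\fixeddimensionsset{k}{G}{i}$ and $\ranktensor{G}{k}$ are explicit functions of the map $M\mapsto d_M$. Your additional care about set equality versus cardinality and the labelled setting is sound but does not change the argument.
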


Note that \cref{lem:samerank} gives only a necessary condition for LU-equivalence, but not a sufficient one.
Note further that computing $\fixeddimensionsset{k}{G}{i}$ for all $i \in \{0, \dots, k - 1\}$ and computing $ \ranktensor{G}{k}$ provides the same information.
When the graphs have a marginal state with different dimensions, we can conclude they do not belong to the same LU-orbit. The converse is not true: two graphs with exactly the same marginal ranks are not necessarily LU-equivalent. The smallest counter-example are two graphs of $9$ nodes, we refer to \cref{sec:examples} for more details.

As an example of the usage of \cref{lem:samerank}, the left and middle graphs in \cref{fig:twobodymarginalLUandnonLU} belong to the same LU-orbit, and thus have the same marginal ranks. The right graph is not LU-equivalent: the two highlighted marginals have $\stabdim{\{1,3\}} = \stabdim{\{2,3\}} = 0$, whereas the other two graphs have $\stabdim{\{1,3\}} = \stabdim{\{2,3\}} = 1$.

\begin{figure}
    \centering
    \includegraphics[width = \linewidth]{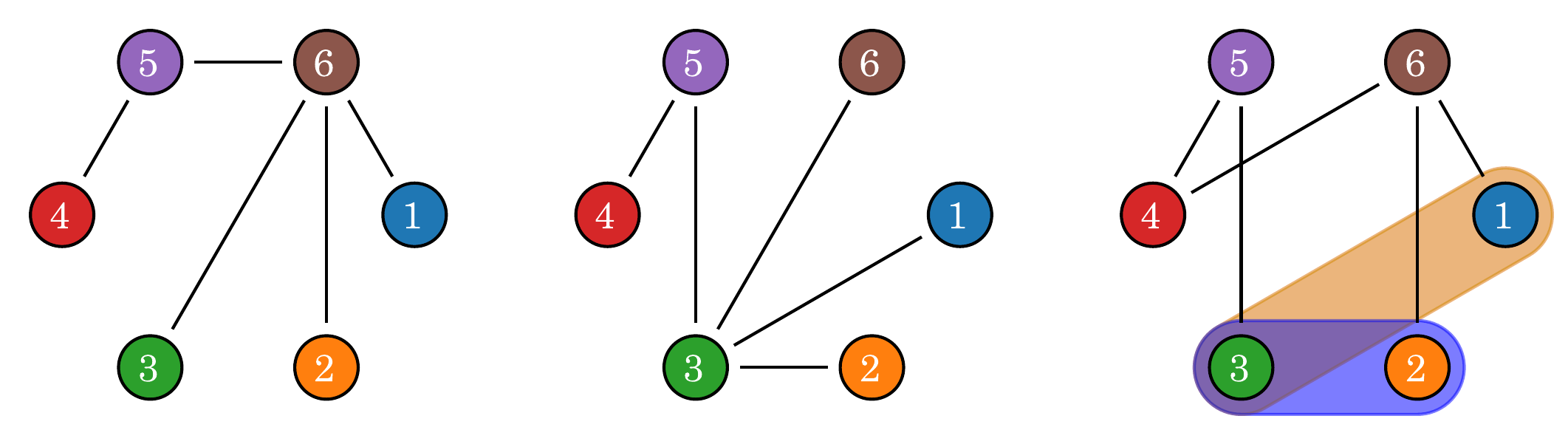}
    \caption{The graph on the left and in the middle have the same two-body marginal dimensions $\stabdim{\abs{M} = 2}$;  they also belong to the same LU-orbit (more specifically, they are related by a local complementation on node  6 (brown), followed by a local complementation on node 3 (green)). The graph on the right has certain marginals with a different dimension from the other two graphs: the highlighted two-body marginals $M_1 = \lbrace 1,3 \rbrace$ and $M_2 = \lbrace 2,3 \rbrace$ have $\stabdim{M} = 0$ for the right graph, instead of $\stabdim{M} = 1$ for the other two graphs.}
    \label{fig:twobodymarginalLUandnonLU}
\end{figure}

It can happen that two different orbits give the same two-body marginal ranks, but that higher-order marginals are different. In telling apart different LU-orbits, it is therefore sometimes necessary to increase the marginal size. For instance, the dimensions of all the $2$-body marginals from both graphs in \cref{fig:example_3marg} are  the same. However, some of the $3$-body marginals have different dimensions; the left graph has $\stabdim{M} = 1$ on the highlighted $3$-body set, whereas the right graph (corresponding to the absolutely maximally entangled state
of six qubits) has $\stabdim{M} = 0$ for the same set.

\cref{lem:samerank} is considerably easier to check for labeled graphs; for unlabeled graphs we additionally need to verify that we are comparing sets of associated nodes; there are generally a super-exponential number of these. 
However, there are ways to relax \cref{lem:samerank} so that it becomes more suitable for unlabeled graphs. We already have that \ranklist{}{k} is invariant under permutations of the nodes of the underlying graph, and as such constant when evaluated over elements in an entanglement class. However, \ranktensor{}{k} will generally not be invariant under these permutations. We first derive a permutation-invariant measure from the tensor \ranktensor{}{k}:
\begin{Definition}
    Consider a marginal tensor \ranktensor{G}{k} as defined in \cref{def:ranktensor}.  Let $\{\lambda_{1}, \dots, \lambda_{k}\}$ be the (real) eigenvalues of the Hermitian matrix we get after summing over $k-2$ arbitrary axes of the tensor. Then we define \tensoreig{G}{k} as the product of nonzero eigenvalues $\lambda_{i}$:
    \begin{equation}
        \tensoreig{G}{k} = \Pi_{\lambda_{i} \not = 0} \lambda_{i}.
    \end{equation}
    Note that the eigenvalues are invariant under permutation of the marginal tensor \ranktensor{G}{k}.
\end{Definition}

We are now ready to state a corollary of \cref{lem:samerank} containing two LU-invariant measures that are equal up to permutations of the nodes, i.e.~constant over elements of an entanglement class.

\begin{Corollary} \label{cor:rankslist}
    Consider two unlabeled graphs $G=(V,E)$ and $G'=(V,E')$ defined on the same node set $V$. If the corresponding graph states $\ket{G}$ and $\ket{G'}$ are LU-equivalent, we have
    \begin{align}
        \ranklist{G}{k} = \ranklist{G'}{k}, \label{eq:cond_ranklist} \\
        \tensoreig{G}{k} = \tensoreig{G'}{k}. \label{eq:cond_tensoreig} 
    \end{align}
\end{Corollary}

\begin{figure}
    \centering
    \includegraphics[width = 0.75\linewidth]{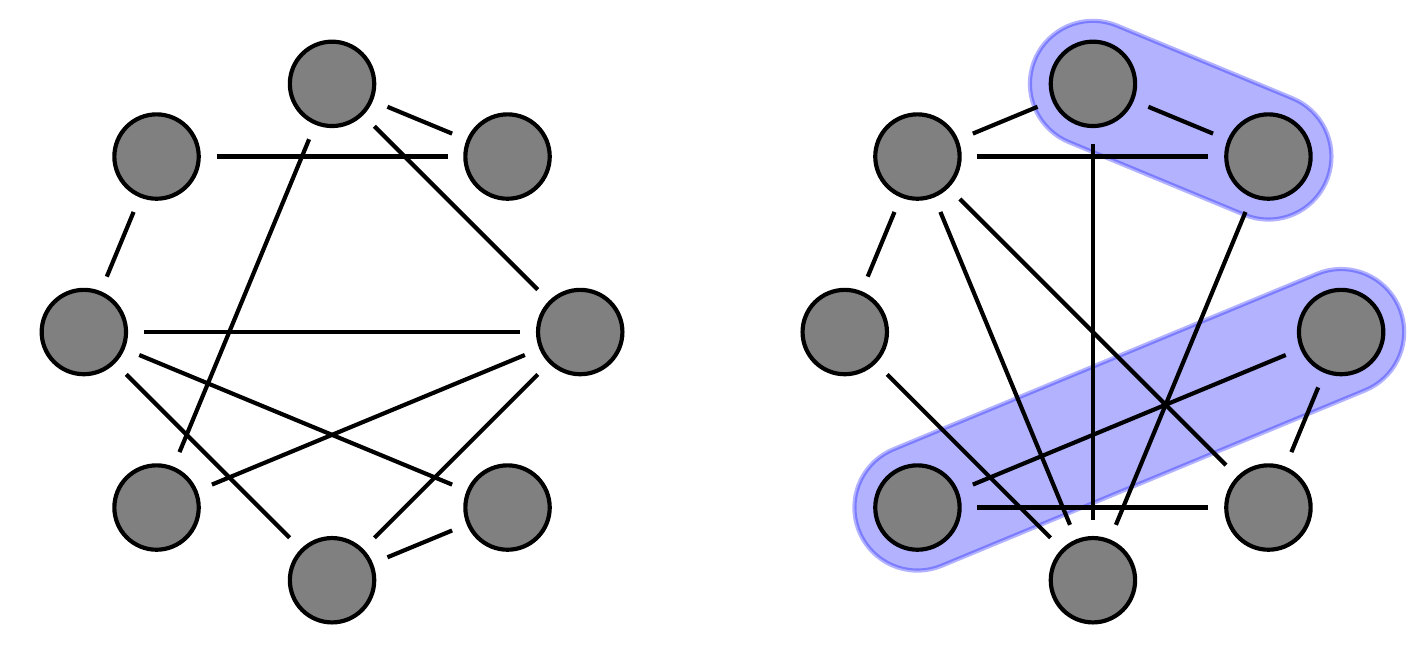}
    \caption{The unlabeled graphs, $G$ (left) and $G'$ (right) belong to different entanglement classes. The left graph has $\ranklist{G}{2} = [21, 0]$, whereas the right graph has $\ranklist{G'}{2} = [19,2]$; the two marginals with $\stabdim{} = 1$ have been highlighted. Note that purely the inequality $\ranklist{G}{2} \not = \ranklist{G'}{2}$ is enough to conclude that $G$ and $G'$ do not belong to the same entanglement class.}
    \label{fig:example_unlabeledringandline}
\end{figure}
Note that condition \eqref{eq:cond_ranklist} is easier to check in principle because it is sufficient to compute all marginal dimensions. However, checking condition \eqref{eq:cond_tensoreig} provides stronger results.
\cref{cor:rankslist} can be used to test if two unlabeled graph states belong to the same entanglement class. \cref{fig:example_unlabeledringandline} contains two unlabeled graphs that do not belong to the same entanglement class: they have a differently valued \ranklist{}{2}, and thus cannot belong to the same class.

We now investigate to which extent the knowledge of the dimensions of the complete set of $k$-body marginals of a labeled or unlabeled LU-orbit allows to characterize that orbit (or class). For a given labeled graph state, does its marginal tensor \ranktensor{}{k} uniquely correspond to its own LU-orbit, so that any graph state from another LU-orbit has a different \ranktensor{}{k}? Similarly, for an unlabeled graph state, does its (permutation-invariant) measure \ranklist{}{G} or \tensoreig{}{G} correspond uniquely to its own entanglement class?

To test how well the presented invariants perform in distinguishing different LU-orbits (for \ranktensor{}{k}) or classes (for \ranklist{}{k} and \tensoreig{}{k}), and comparing graphs for equivalence, we introduce two figures of merit; these can be computed for the aforementioned invariants, for any graph size $n$ and marginal size $k$:
\begin{itemize}
    \item The ratio of different invariants, divided by the total number of LU-orbits (for \ranktensor{}{k}) or entanglement classes (for \ranklist{}{k} and \tensoreig{}{k}). We denote this $r(\ranktensor{}{k})$, $r(\ranklist{}{k})$ or $r(\tensoreig{}{k})$.
    \item The probability that, given two random graph states $\ket{G}$ and $\ket{G'}$, their invariants are evaluated at the same value, even though they do not belong to the same LU-orbit (for \ranktensor{}{k}) or entanglement class (for \ranklist{}{k} and \tensoreig{}{k}). We denote these $p(\ranktensor{}{k})$, $p(\ranklist{}{k})$ and $p(\tensoreig{}{k})$.
\end{itemize}

We compute the marginal tensors of a representative of every LU-orbit for $k \in  \{2,\dots \lfloor \frac{n}{2}\rfloor\}$, up to $9$ qubits. Entanglement class representatives are sourced from the supplementary material of \cite{cabelloOptimalPreparationGraph2011a}; the representatives of the associated orbits are computed from these by exhaustive search through all permutations. 

Subsequently, we compute the figures of merit $r(\ranktensor{}{k})$ and $p(\ranktensor{}{k})$. If $r(\ranktensor{}{k}) = 1$ it means that every LU-orbit has a unique tensor \ranktensor{}{k} to identify it, whereas it is one divided by the number of orbits when all different LU-orbits have the same marginal tensor \ranktensor{}{k}. Similarly, if $p(\ranktensor{}{k}) = 0$ it means that two random LU-inequivalent graph states will always have a different \ranktensor{}{k}, while when $p(\ranktensor{}{k})$ is close to one, it is very likely that the two graph states will have the same \ranktensor{}{k}. See \cref{tab:discerningLUorbits} for the results. 

\begin{table}[htbp]
\centering
\begin{tabular}{l|ccc|ccc}
	 \toprule
 $n$  	& $r(\ranktensor{}{2})$ 	& $r(\ranktensor{}{3})$ 	& $r(\ranktensor{}{4})$ 	& $p(\ranktensor{}{2})$ 	& $p(\ranktensor{}{3})$ 	& $p(\ranktensor{}{4})$ \\ 
	 \midrule
3   	& 1	 & -	 & -	 & 0	 & -		 & -	   \\ 
4   	& 1	 & -	 & -	 & 0	 & -		 & -	   \\ 
5   	& 1	 & -	 & -	 & 0	 & -		 & -	   \\ 
6   	& 0.52	 & 1	 & -	 & 0.05	 & 0	 & -	   \\ 
7   	& 0.13	 & 1	 & -	 & 0.12	 & 0	 & -	   \\ 
8   	& 0.02	 & 0.88	 & 1	 & 0.22	 & 0.0001	 & 0   \\ 
9   	& 0.001	 & 0.48	 & 0.999	 & 0.37	 & 0.0004	 & 3e-10   \\ 
\bottomrule
\end{tabular}%
\caption{For labeled graphs of size $n \geq 3$ and marginal size $k \leq \lfloor \frac{n}{2}\rfloor$, we compute $r(\ranktensor{}{k})$ and $p(\ranktensor{}{k})$. 
If $r(\ranktensor{}{k}) = 1$, each LU-orbit uniquely corresponds to a specific \ranktensor{}{k}, serving as its identifier. Similarly, if $p(\ranktensor{}{k}) = 0$, any two LU-inequivalent graph states will have a different \ranktensor{}{k}.
}%
\label{tab:discerningLUorbits}
\end{table}

Similarly to the LU-orbits, we compute \ranklist{}{k} and \tensoreig{}{k} for a representative of every entanglement class, and compute the figures of merit $r(\ranklist{}{k})$, $r(\tensoreig{}{k})$, $p(\ranklist{}{k})$, and $p(\tensoreig{}{k})$. These results are given in \cref{tab:discerningclasseswithl} for \ranklist{}{k} and in \cref{tab:discerningclasseswitht} for \tensoreig{}{k}.

\begin{table}[htbp]
  \centering
\begin{tabular}{l|cccc|cccc}
	 \toprule
 $n$  	& $r(\ranklist{}{2})$ 	& $r(\ranklist{}{3})$ 	& $r(\ranklist{}{4})$ 	& $\mathbf{R}$  	& $p(\ranklist{}{2})$ 	& $p(\ranklist{}{3})$ 	& $p(\ranklist{}{4})$ 	& $\mathbf{P}$  \\ 
	 \midrule
3   	& 1	 & -	 & -	 & 1	 & 0	 & -		 & -		 & 0   \\ 
4   	& 1	 & -	 & -	 & 1	 & 0	 & -		 & -		 & 0   \\ 
5   	& 1	 & -	 & -	 & 1	 & 0	 & -		 & -		 & 0   \\ 
6   	& 0.73	 & 0.82	 & -	 & 1	 & 0.01	 & 0.01	 & -		 & 0   \\ 
7   	& 0.42	 & 0.85	 & -	 & 0.92	 & 0.17	 & 0.03	 & -		 & 0.03   \\ 
8   	& 0.15	 & 0.54	 & 0.56	 & 0.94	 & 0.30	 & 0.05	 & 0.03	 & 0.01   \\ 
9   	& 0.04	 & 0.34	 & 0.70	 & 0.83	 & 0.44	 & 0.05	 & 0.01	 & 0.01   \\ 
    \bottomrule
    \end{tabular}%
    \caption{
    The table details the computation of $r(\ranklist{}{k})$ and $p(\ranklist{}{k})$ for unlabeled graphs with sizes $n \geq 3$ and marginal sizes $k \leq \lfloor \frac{n}{2}\rfloor$, applying prior definitions for $r$ and $p$. It identifies entanglement classes using \ranklist{}{k} as unique markers or distinctions between classes. The $\mathbf{R}$ and $\mathbf{P}$ columns show the aggregated  ratios $r$ and  probabilities $p$ for class identification.
    }%
  \label{tab:discerningclasseswithl}%
\end{table}%

\begin{table}[htbp]
  \centering
\begin{tabular}{l|cccc|cccc}
	 \toprule
 $n$  	& $r(\tensoreig{}{2})$ 	& $r(\tensoreig{}{3})$ 	& $r(\tensoreig{}{4})$ 	& $\mathbf{R}$  	& $p(\tensoreig{}{2})$ 	& $p(\tensoreig{}{3})$ 	& $p(\tensoreig{}{4})$ 	& $\mathbf{P}$  \\ 
	 \midrule
3   	& 1	 & -	 & -	 & 1	 & 0	 & -		 & -		 & 0   \\ 
4   	& 1	 & -	 & -	 & 1	 & 0	 & -		 & -		 & 0   \\ 
5   	& 1	 & -	 & -	 & 1	 & 0	 & -		 & -		 & 0   \\ 
6   	& 0.73	 & 1	 & -	 & 1	 & 0.01	 & 0	 & -		 & 0   \\ 
7   	& 0.46	 & 1	 & -	 & 1	 & 0.16	 & 0	 & -		 & 0   \\ 
8   	& 0.19	 & 0.89	 & 1	 & 1	 & 0.30	 & 0.0001	 & 0	 & 0   \\ 
9   	& 0.06	 & 0.73	 & 0.998	 & 0.998	 & 0.44	 & 0.01	 & 1e-06	 & 1e-06   \\ 
    \bottomrule
    \end{tabular}%
    \caption{
    The table calculates $r(\tensoreig{}{k})$ and $p(\tensoreig{}{k})$ for unlabeled graphs of size $n \geq 3$ with marginal size $k \leq \lfloor \frac{n}{2}\rfloor$, leveraging earlier definitions for $r$ and $p$. It distinguishes entanglement classes using eigenvalues \tensoreig{}{k} as unique identifiers or indicators of different classes. Columns $\mathbf{R}$ and $\mathbf{P}$ present combined ratios $r$ and probabilities $p$ for class identification.
    }%
  \label{tab:discerningclasseswitht}%
\end{table}%

We see that for labeled graphs, our methods can perfectly distinguish all LU-orbits up to eight qubits, provided that marginals of a large enough size are used. At the same time, the number of different invariants versus the number of different orbits (i.~e.~$r(T_{k})$) diverge fast: e.g.~$r(T_{2})$ equals $1$ for up to five qubits, but becomes $0.001$ for eight qubits. 

For nine qubits and more, the methods provide necessary but not sufficient criteria.
 This is a direct consequence of the fact that for nine qubits there are (up to permutations) two separate orbits that have the exact same marginal structure. 
 
 Comparing the marginal structure allows us to identify counterexample candidates for the LU-LC conjecture. While there are 440 classes of unlabeled 9 node graphs, our method filters out 2  different classes, which are not LC-equivalent but might be LU-equivalent.
These classes are discussed in more detail in \cref{sec:examples}, where it is proven that for up to 10 qubits they are in fact LU-inequivalent in addition to being LC-inequivalent.

Turning to the second figure of merit (i.e. the right side of  \cref{tab:discerningLUorbits}), we also see that for up to 8 qubits there is always a marginal size such that the probability of a false positive becomes zero. This is a direct consequence of the left side of the table, more specifically that $r(T_{k})$ can equal $1$ for all these cases. Here, a false positive is the case that two (randomly chosen) graphs are identified as belonging to the same orbit, without that being the case.
The results differ, however, when the ratios and probabilities are not zero. For instance, for 7 qubits $r(T_{2})$ is only $0.13$, but still there is only a twelve percent probability that two random but non-LU-equivalent graphs have the same $T_{2}$. This disparity is explained by the fact that the orbits differ in size: the smallest orbit is the GHZ orbit with 8 graphs, of which two elements are shown in \cref{fig:example_starvscomplete}. The largest orbit, of which there are 105 permutational copies, contains 1096 graphs. This effect seems to get stronger with larger graphs. For 9 qubits, the number of different $T_{2}$'s is only $0.1\%$ of the total number of orbits. Still, the probability of a false positive is less than half. Similarly, $r(T_{3})$ is roughly half, but the chance of a false positive is negligible. 

For unlabeled graphs (i.e.~entanglement classes) we see a similar behaviour. First considering $l_{k}$, we see similar scaling as for labeled graphs: the number of different invariants versus the number of different classes diverge fast, and the probabilities for a false positive increase slower. Only classes up to 6 qubits can be perfectly distinguished, which is less than for the orbits. Nevertheless, the same behaviour as for labeled graphs is obtained when considering $t_{k}$ instead: here, again all classes up to and including 8 qubits can be perfectly distinguished. 

There is one surprising property that exists when comparing lists ($\ranklist{}{k}$). We note that the reliability of our methods increases, if we consider lists of several marginal set sizes $k$ together at the same time. For example for 8-node graphs, the ratio $r_{k}$ is below $60\%$ for each $k$ equal to 2, 3, and 4. However, combining all three lists gives a ratio of $94\%$. One reason is that the information given in a list $\ranklist{}{k}$ can only be partly estimated by a list of higher $k' > k$. This behaviour is not apparent in $T_{k}$ (and by extension $t_{k}$), because $T_{k-1}$ is `embedded' into the (higher-dimensional) diagonal of $T_{k}$.

\subsection{Characterizing LU-orbits by Their Graph Structures} \label{sec:graphstructures}

A visual tool to distinguish graph states are metagraphs. We can directly derive the set $\SSS_M$  of a marginal set $M$ from its metagraph. 
We can further group metagraphs into metaorbits to see which graph structures transform into each other.
In this section we discuss how to derive $\SSS_M$ for general sets $M$. We further present the metagraph orbits of sets of size two and three and discuss how our observations are connected to the results of the previous section.

\subsubsection{Computing the Marginal Dimension Graphically}

A metagraph of a set $M = \lbrace 1, 2 \rbrace$ has three type-2 nodes: $\mathcal{P}(\lbrace 1, 2 \rbrace) \setminus \emptyset = \lbrace \{1\}, \{2\}, \{1,2\} \rbrace$. The marginal $\rho_M$ can at most get stabilized by $\1, g_1, g_2$, and $g_1 g_2$, since all other stabilizer elements $g_j$ for $j \notin M$ vanish in the partial trace. We determine which of the stabilizer elements do not vanish from the non-empty neighbor sets. The stabilizer element $g_1$ does not vanish if the type-2 nodes $[1]$ and $[1,2]$ are not connected to type-1 nodes in $M$. 
This is the only case where $g_1$ has support on $M$ and no support outside, that is $g_1 = X_1 Z_2$. An equivalent argumentation can be done for $g_2$. 
 The stabilizer element $g_1 g_2$ does not vanish if the type-2 nodes $[1]$ and $[2]$ are not connected to type-1 nodes in $M$. Depending on whether type-1 nodes 1 and 2 are connected or not, $g_1 g_2$ is given by 
$Y_1 Y_2$ 
 or 
 $X_1 X_2 $, respectively, which has support on $M$. 
The case of two node sets was already discussed in 
Ref.~\cite{Gittsovich2010Multiparticle}. An example is shown in \cref{fig:example_metagraphs}(d)  where the lower right (yellow) metagraph has $[5,7]$ as only type-2 node connected to type-1 nodes in $M = \{ 5,7 \}$. This indicates that the marginal $\rho_{\{ 5,7 \}}$ is stabilized by $\SSS_{\{ 5,7 \}} = \{ \1, g_5 g_7 \}$. It follows that $d_{\{ 5,7 \}} = 1$.

This concept can get generalized to an arbitrary set $M$.
The marginal state $\rho_M$ can in principle get stabilized by stabilizer elements $\lbrace g_L \mid L \subseteq M \rbrace$, where $g_L \coloneqq \prod_{i \in L} g_i$. A stabilizer element $g_L$ has support on $M$ if and only if all type-2 nodes which are connected to $M$ are either connected to an even number of type-1 nodes in $L$ or are not connected to nodes in $L$. That is $g_L$ is in $\SSS_M$ if and only if all type-2 nodes $w \in \mathcal{P}(M)_{k \geqslant 0}$ for which $\vert w \cap L \vert$ is an odd number are not connected to type-1 nodes in $M$.
Note that a similar definition was given in Ref.~\cite{claudet2024covering}.

For example the metagraph (b) in \cref{fig:example_interesting_marginals} has two type-2 nodes $\{ [1 ,2], [2,3] \}$ connected to the set $M = \{ 1,2,3 \}$. This graph gets stabilized by $g_{L = \{ 1,2,3 \}} = g_1 g_2 g_3$, since $\abs{\{1 ,2 \} \cap \{1 ,2, 3 \}} = \abs{\{2,3 \} \cap \{1 ,2, 3 \}} = 2$. All other subsets $L \subseteq M$, which are not the empty set have an odd number of elements in the intersection with $\{1 ,2 \} $ or $ \{2,3 \}$ and therefore there are no other stabilizers than $\1$. 

\begin{figure}
    \centering
    \includegraphics[width = 0.95\linewidth]{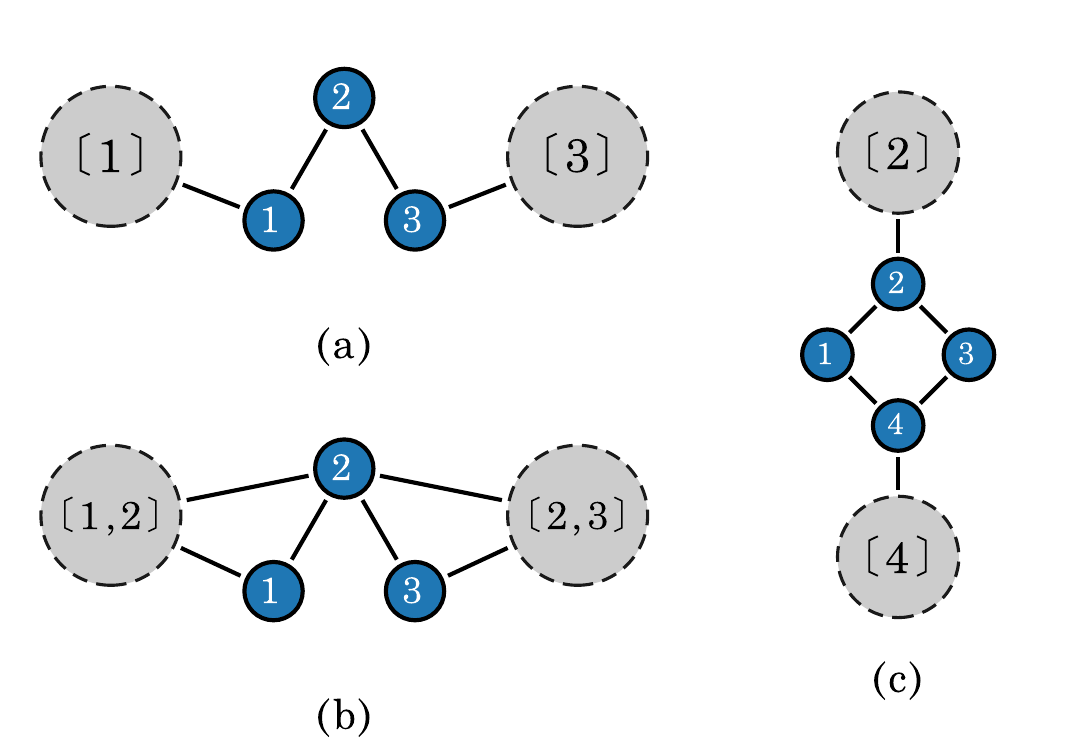}
    \caption{Example of new structures detected by marginals. \textbf{$(a)$} We have $\SSS_{\lbrace 1,2,3 \rbrace} = \lbrace \1, g_2 \rbrace$, while $\SSS_{N \subsetneq \lbrace 1,2,3 \rbrace} = \lbrace \1 \rbrace$ for all proper subsets $N$. Therefore we can detect this structure with three-marginals, but not with two-marginals. \textbf{$(b)$} We have $\SSS_{\lbrace 1,2,3 \rbrace} = \lbrace \1, g_1 g_2 g_3 \rbrace$, while $\SSS_{N \subsetneq \lbrace 1,2,3 \rbrace} = \lbrace \1 \rbrace$ for all proper subsets $N$. Also here we can detect this structure with three-marginals, but not with two-marginals. \textbf{$(c)$} We have $\SSS_{\lbrace 1,2,3, 4 \rbrace} = \lbrace \1, g_1, g_4, g_1 g_4 \rbrace$, $\SSS_{\lbrace 1, 4 \rbrace} = \lbrace \1,  g_1 g_4 \rbrace$ and $\SSS_{N \subsetneq \lbrace 1,2,3,4 \rbrace} = \lbrace \1 \rbrace$ for all other proper subsets $N$. Therefore we can detect some details of the structure with four-marginals, but not with lower number marginals.}
    \label{fig:example_interesting_marginals}
\end{figure}

An alternative definition of the stabilizer set (already defined in \cref{eq:stab_set_tr,eq:stab_set_supp}) is given by
\begin{align}
    \SSS_M &\coloneqq \left\{ g_{L \subseteq M} \middle|
    \begin{array}{l}
     \forall w \in \mathcal{P}(M)_{k \geqslant 0}\text{ connected to } M : \\ 
     \vert L \cap w \vert \text{ is even}  
    \end{array} 
    \right\}. \label{eq:stab_set_meta}
\end{align}
Therefore we can reproduce the results discussed in the previous chapter from metagraphs.

\subsubsection{Marginal Orbits}

A deeper look into graph structures can be taken by considering marginal orbits. 

\begin{Definition}[Marginal orbit]
    Two metagraphs $G_M$ and $G'_M$ belong to the same marginal orbit, if there are LU-equivalent graphs $G$ and $G'$ and a set $M$ such that $G_M$ and $G'_M$ are their marginal graphs.
\end{Definition}

Note that a marginal orbit is less restrictive than a graph orbit. Two LU-inequivalent graphs might have the same metagraph. 

\begin{Corollary} \label{cor:cluster}
     Consider two graphs $G=(V,E)$ and $G'=(V,E')$ defined on the same node set $V$. If the corresponding graph states $\ket{G}$ and $\ket{G'}$ are LU-equivalent, all pairs of metagraphs belong to the same marginal orbit.
\end{Corollary}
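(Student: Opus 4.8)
The plan is to reduce Corollary~\ref{cor:cluster} to Lemma~\ref{lem:samerank} together with the metagraph-to-marginal dictionary established in \cref{sec:prelim,sec:tools}. Recall from \cref{eq:stab_set_meta} and the surrounding discussion that the metagraph $G_M$ contains, by construction, exactly the data needed to read off the reduced stabiliser group $\SSS_M(G)$: the induced subgraph on $M$ (the type-1 edges) together with, for each nonempty $M'\subseteq M$, a single bit recording whether some vertex outside $M$ has neighbourhood-intersection with $M$ equal to $M'$ (the type-2 adjacencies). Conversely, these two pieces of data \emph{are} the metagraph. So the first step is to observe that $G_M = G'_M$ if and only if $\SSS_M(G) = \SSS_M(G')$ \emph{as subgroups of the Pauli group}, i.e.\ the generating sets $\{g_L : L\subseteq M\}$ that survive the partial trace and their explicit Pauli forms coincide.

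Next I would invoke the hypothesis that $\ket{G}$ and $\ket{G'}$ are LU-equivalent. Here one must be slightly careful: LU-equivalence only guarantees that the marginals $\rho_M(G)$ and $\rho_M(G')$ are related by a \emph{local} unitary $\bigotimes_{i\in M} U_i$, not that they are literally equal, and in particular the two reduced stabiliser groups need not be the same subgroup of the Pauli group (they are conjugate by that local unitary). Correspondingly, two LU-equivalent graphs generally do \emph{not} have literally identical metagraphs for a given $M$ --- which is exactly why the statement speaks of belonging to the same \emph{marginal orbit} rather than of equality of metagraphs. So the honest second step is essentially a tautology unpacking: by \cref{lem:samerank} (or more elementarily, directly from the definition of marginal orbit), if $\ket{G}$ and $\ket{G'}$ are LU-equivalent then for every $M\subsetneq V$ the pair $(G,G')$ witnesses that $G_M$ and $G'_M$ satisfy the defining condition of the marginal-orbit relation --- namely, there exist LU-equivalent graphs (take $G$ and $G'$ themselves) and a set (take $M$) realising them as marginal graphs. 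Hence $G_M$ and $G'_M$ lie in the same marginal orbit for every $M$, which is the claim.

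The third step is just to package this over all subsets: ``all pairs of metagraphs belong to the same marginal orbit'' means precisely that for each $M\subsetneq V$ we have $G_M \sim G'_M$ in the marginal-orbit relation, and this has been established for arbitrary $M$ in the previous step. I would also add one sentence connecting this to the rank invariants: since $d_M$ is recoverable from $G_M$ via $d_M = \vert M\vert - \operatorname{rank}\Gamma^{G_M}_{M,\mathcal{P}(M)}$ (shown in \cref{sec:tools}), Corollary~\ref{cor:cluster} refines \cref{lem:samerank}, and a contrapositive reading gives the usable statement: if some $M$ yields metagraphs in different marginal orbits, then $\ket{G}$ and $\ket{G'}$ are not LU-equivalent.

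The main obstacle --- really the only subtlety --- is conceptual rather than technical: pinning down the right notion of ``sameness'' of metagraphs so that the corollary is both true and non-vacuous. Because the marginal-orbit equivalence is defined \emph{through} LU-equivalence of ambient graphs, the proof risks collapsing to a triviality, and I would want to state it honestly as such while emphasising that the content lies in the \emph{converse direction used as a test}: the marginal orbit is a genuinely computable coarse invariant (one enumerates the finitely many metagraphs reachable within an LU-orbit, or within an LC-orbit as a tractable proxy), and two graphs landing in distinct marginal orbits for some $M$ certifies LU-inequivalence. I would make sure the proof text flags that the LC-orbit gives a computable under-approximation of the marginal orbit, so that the corollary is actually actionable despite the definitional circularity of the stated equivalence relation.
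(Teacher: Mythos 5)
Your proposal is correct and matches the paper's treatment: the paper states this corollary without any proof precisely because, as you observe, it is an immediate unpacking of the definition of marginal orbit --- for every $M\subsetneq V$ the graphs $G$, $G'$ and the set $M$ themselves are the required witnesses. One inessential caveat: your side-claim that $G_M = G'_M$ if and only if $\SSS_M(G) = \SSS_M(G')$ fails in the backward direction (distinct collections of connected type-2 nodes with the same $\mathbb{F}_2$-span determine the same reduced stabiliser group), but nothing in your argument actually relies on that equivalence.
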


Marginal orbits of two node sets as well as clustering of two node sets with certain properties were discussed in Ref.~\cite{burchardt2023foliage} (in this reference, it was called the foliage partition).
It was shown that metagraphs of sets $M$ of two nodes and $\stabdim{M} = 1$ belong to the same marginal orbit. It was further shown that metagraphs of general sets $M$ and $\stabdim{N} = 1$ for all $N \subseteq M$, $\vert N \vert = 2$ belong to the same marginal orbit.

While the metaorbit of two-node metagraph states $G_M$ can be decided by computing $\stabdim{M}$, this is not the case in general. For example for $\vert M \vert = 3$, $\stabdim{M}$ can be 0, 1, or 2, but there are more than three orbits. We can have   $\stabdim{M} = 1$ in two cases: either because we found a two-marginal and added one node, that is  $\stabdim{N} = 1$ for one $N \subsetneq M$ or because we found a new structure, that is $\stabdim{N} = 0$ for all $N \subsetneq M$. The two cases clearly belong to different marginal orbits.
Also, we can have   $\stabdim{M} = 2$ in two cases: either because we found a two-marginal and a new structure in three nodes, that is  $\stabdim{N} = 1$ for one $N \subsetneq M$ or because all subsets of two nodes have an interesting structure, that is $\stabdim{N} = 1$ for all $N \subsetneq M$. Also here the two cases clearly belong to different marginal orbits.

We can determine whether $\stabdim{M} > 0$ indicates a new structure on the whole set $M$ or whether all structures were already detected in subsets of $M$ by comparing stabilizers.  
We find something new, if and only if the following relation holds:
\begin{align}
    \vert \SSS_M \vert &> \vert \langle \bigcup_{N \subsetneq M} \SSS_N \rangle \vert,  \label{eq:threecondset}
\end{align}
where $\langle \SSS \rangle$ denotes the group generated by elements in $\SSS$.
In the case $\vert M \vert = 3$, an equivalent condition is given by
\begin{align}
    \stabdim{M} &>  \sum_{N \subset M, \vert N \vert = 2} \stabdim{N}. \label{eq:threecond} 
\end{align}
In \cref{fig:example_interesting_marginals} there appear examples of structures which fulfill this condition.

An observation about \cref{eq:threecondset} is stated in the following lemma. It is closely related to Lemma $1$ from \cite{nestLocalUnitaryLocal2005} and mostly follows from it. For completeness we still give a self-contained proof.

\begin{Lemma}
\label{Lemma:LHS_vs_RHS}
    For any set $M$, the relation between the dimensions of a stabilizer set of a marginal state $\rho_M$ and its sub-marginals obey the following relation:
    \begin{align}
    \label{eq:LHS_vs_RHSmain}
        \vert \SSS_M \vert = 2^\ell \vert \langle \bigcup_{N \subsetneq M} \SSS_N \rangle \vert,
    \end{align}
    where $\ell \in \lbrace 0, 1, 2 \rbrace$. 
    Furthermore, if \cref{eq:LHS_vs_RHSmain} is satisfied for $\ell=2$, then the size of $M$ is an even number, $\langle \bigcup_{N \subsetneq M} \SSS_N \rangle = \{I\}$ and  it holds that $\abs{\SSS_{M}} = 4$.
\end{Lemma}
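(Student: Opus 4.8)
The plan is to work with the stabiliser groups as $\mathbb{F}_2$-vector spaces of Pauli strings (identifying a Pauli up to phase with its symplectic vector), and to analyse the quotient $\SSS_M \big/ \langle \bigcup_{N \subsetneq M} \SSS_N \rangle$. Write $R_M := \langle \bigcup_{N \subsetneq M} \SSS_N \rangle$; clearly $R_M \leqslant \SSS_M$ since every $\SSS_N$ with $N \subsetneq M$ is contained in $\SSS_M$ (support in $N$ implies support in $M$). Since all these are elementary abelian $2$-groups, $\abs{\SSS_M} = 2^\ell \abs{R_M}$ for some integer $\ell \geqslant 0$, and the whole content of the first claim is the bound $\ell \leqslant 2$. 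For this I would invoke the connection to Lemma~1 of \cite{nestLocalUnitaryLocal2005}: that lemma controls, for a bipartition, how the "local" stabiliser part of a reduced state relates to stabilisers of finer subsets, and bounds the number of "genuinely new" independent stabiliser generators appearing at the top level by $2$. Concretely, any element $g_L \in \SSS_M \setminus R_M$ has $\mathrm{supp}(g_L)$ meeting $M$ but, crucially, the "new" generators modulo $R_M$ must be supported on the whole of $M$ in the sense that they are not contained in any proper-subset stabiliser; the cited lemma shows at most two such independent generators can exist. I would reproduce the short argument: if $g_{L_1}, g_{L_2}, g_{L_3}$ were three representatives of independent cosets, one derives that some nontrivial product lies in a $\SSS_N$ for a proper $N$, contradicting independence over $R_M$ — this is the self-contained proof promised in the statement.

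For the second part, suppose $\ell = 2$, so there are two independent new generators; call them $g_{L_1}, g_{L_2}$ with $L_1, L_2 \subseteq M$. First I would argue $R_M = \{I\}$. Suppose not: then some $\SSS_N$ with $\emptyset \neq N \subsetneq M$ is nontrivial, contributing at least one independent stabiliser supported inside $N$; combined with $g_{L_1}, g_{L_2}$ this gives $\stabdim{M} \geqslant 3$ coming from "new plus old" structure — but then one re-examines the counting in the cited lemma, which in fact bounds the \emph{total} excess over \emph{all} proper subsets by $2$ only when $R_M$ is trivial, and otherwise the excess is strictly smaller; equivalently, the defining inequality \cref{eq:threecondset} together with $\ell=2$ forces $R_M$ minimal. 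The cleanest route is: if $R_M \neq \{I\}$ then $R_M$ already "uses up" support on a proper subset, and any element of $\SSS_M$ is a product of an element of $R_M$ and something supported on $M$ with full spread; a dimension count on the symplectic form restricted to $M$ shows only one further independent direction survives, giving $\ell \leqslant 1$. Hence $\ell = 2 \Rightarrow R_M = \{I\}$ and therefore $\abs{\SSS_M} = 2^2 \abs{R_M} = 4$.

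Finally, to see $\abs{M}$ is even: with $R_M = \{I\}$ and $\SSS_M = \{I, g_{L_1}, g_{L_2}, g_{L_1}g_{L_2}\}$, each nontrivial element must be supported on \emph{all} of $M$ (if any had support in a proper subset $N$, it would lie in $\SSS_N \subseteq R_M = \{I\}$, a contradiction). Now use the explicit form of the generators from \cref{eq:grstategen}: $g_L = \bigotimes_{i \in L} X_i \otimes (\text{$Z$'s on the neighbourhood})$, and the condition $\mathrm{supp}(g_L) \subseteq M$ with $g_L \in \SSS_M$ means (via \cref{eq:stab_set_meta}) that the $Z$-part contributes only inside $M$. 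On each vertex $i \in M$, examine which of $X, Y, Z$ appears across the three nontrivial group elements; since the three elements are $g_{L_1}, g_{L_2}, g_{L_1}g_{L_2}$ and all are fully supported, each qubit in $M$ sees a nontrivial Pauli from all three, which for a single qubit forces the three to be (in some order) $X, Y, Z$ on that site. Counting the global phase / commutation: the three elements pairwise commute and multiply to identity only if the number of sites is even — this is the standard parity obstruction ($XYZ = iI$ per qubit, so $\abs{M}$ qubits give $i^{\abs{M}} I$ up to sign, and consistency of $g_{L_1}g_{L_2}$ being a genuine stabiliser with real eigenvalue $\pm 1$ forces $i^{\abs{M}} = \pm 1$, i.e. $\abs{M}$ even). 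The main obstacle I anticipate is making the "at most two new generators" step fully rigorous without simply quoting \cite{nestLocalUnitaryLocal2005} verbatim — pinning down exactly why a third independent generator collapses into a proper-subset stabiliser requires care with the symplectic bilinear form restricted to the coordinates of $M$ and its radical; everything after that is bookkeeping with the per-qubit $\{X,Y,Z\}$ structure and the phase parity.
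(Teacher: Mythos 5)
Your overall architecture matches the paper's: pass to the quotient by $R_M=\langle\bigcup_{N\subsetneq M}\SSS_N\rangle$, observe that representatives of new cosets must have full support on $M$, exploit the fact that a qubit carries only three nontrivial Paulis, and finish the evenness claim with a parity count (the paper uses that $\sigma_1$ and $\sigma_2$ anticommute sitewise, so $\sigma_1\sigma_2=(-1)^{|M|}\sigma_2\sigma_1$, where you use $XYZ=i\1$ per site; both are fine). However, two steps are left as gestures rather than arguments, and they are exactly where the paper does the work. First, for $\ell\leqslant 2$ you defer to ``care with the symplectic bilinear form restricted to $M$ and its radical'' and flag this as your main obstacle. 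No symplectic machinery is needed: if $\sigma_1,\sigma_2$ represent distinct nontrivial cosets of $R_M$ in $\SSS_M$, then $\sigma_1\sigma_2\notin R_M$, so $\sigma_1\sigma_2$ has full support on $M$ (any element of $\SSS_M$ supported on a proper subset $N$ lies in $\SSS_N\subseteq R_M$); hence $\sigma_1$ and $\sigma_2$ carry \emph{different} nontrivial Paulis at \emph{every} site of $M$. A third representative $\sigma_3$ must likewise differ from both at every site, and since only three nontrivial single-qubit Paulis exist this forces $\sigma_3=\pm\sigma_1\sigma_2\in\sigma_1\sigma_2 R_M$, a contradiction. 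That is the entire argument.

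Second, your derivation of $R_M=\{\1\}$ from $\ell=2$ is a genuine gap: ``a dimension count on the symplectic form restricted to $M$ shows only one further independent direction survives'' is an assertion, not a proof, and the preceding appeal to ``re-examining the counting in the cited lemma'' is circular. The paper's argument is short and concrete: if $\tau\in R_M$ with $\tau\neq\1$, pick a site $i$ on which $\tau$ acts nontrivially; since $\sigma_1,\sigma_2,\sigma_3=\sigma_1\sigma_2$ exhaust the three nontrivial Paulis at site $i$, the Pauli of $\tau$ at $i$ coincides with that of some $\sigma_j$, so $\tau\sigma_j$ loses full support on $M$ and therefore lies in $R_M$; but then $\sigma_j=\tau(\tau\sigma_j)\in R_M$, contradicting that $\sigma_j$ represents a nontrivial coset. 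Replace your sketch with this (or an equivalent) argument; once $R_M=\{\1\}$ is established, $|\SSS_M|=4$ is immediate and your evenness count goes through as written.
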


\cref{app:Lemma2_proof} contains a proof of the statement. Notice that this relation was recently observed in Ref.~\cite{claudet2024covering} for the special case when $\vert \langle \bigcup_{N \subsetneq M} \SSS_N \rangle \vert=1$.

A straightforward partition into orbits can be done by computing the stabilizer dimension of the set $M$ and all subsets. A finer partition might be found numerically. Interestingly, for three-node metagraphs the finest numerical partition coincides with the partition found by stabilizer dimensions. We might see a finer partition for larger sets.

\section{Distinguishing LU-Orbits Beyond the Tool of Marginal Dimensions} \label{sec:examples}

Our methods can distinguish LU-orbits and classes of graphs with up to 8 nodes. 
The first example where all signatures are identical but the graphs are not LU-equivalent are two classes of graphs with 9 nodes. Representatives of these are shown in \cref{fig:example_9qubit}. The Bouchet algorithm shows that these two graphs are not LC-equivalent~\cite{bouchet1993Recognizing}. In the following lemma we show that they are not LU-equivalent either.

\begin{Lemma}
    The graphs $L$ and $R$ are not LU-equivalent.
\end{Lemma}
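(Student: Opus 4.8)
The plan is to exhibit an LU-invariant that is finer than the marginal-rank signatures used in \cref{lem:samerank} and \cref{cor:rankslist}, evaluate it on the two graphs $L$ and $R$, and show the values differ. Since the marginal tensors $\ranktensor{}{k}$ agree for all $k$ by assumption, we must look beyond the \emph{dimensions} of the reduced states and instead probe their finer structure --- natural candidates are the full spectra of the marginals (not just their ranks), or polynomial LU-invariants built from the reduced density matrices, such as the purities $\Tr(\rho_M^2)$ or higher moments $\Tr(\rho_M^j)$, collected over all subsets $M$ of a fixed size. For graph states these are determined by the sizes of the reduced stabiliser groups $\SSS_M$, so in fact the obstruction must come from somewhere else: a genuinely non-stabiliser LU-invariant.

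Concretely, I would use one of the standard polynomial LU-invariants for multi-qubit states --- for instance the degree-$2$ invariants obtained by contracting two copies of $\dyad{G}$ against permutation operators across subsets of parties (the Makhlin-type invariants, or the quantities of \cite{Maciazek2013howmany, Kraus2010_PRL}). The first step is to recall that any two LU-equivalent states must agree on the entire family of these invariants. The second step is to compute a well-chosen member of this family on $\ket{L}$ and $\ket{R}$; since both are graph states, this reduces to a finite sum over pairs of stabiliser elements and can be evaluated exactly (symbolically or on a computer) without floating-point error. The third step is simply to observe that the two numbers come out different, which forces $\ket{L} \not\sim_{\mathrm{LU}} \ket{R}$.

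An alternative, more self-contained route avoids invoking a black-box invariant: one could directly apply the Kraus algorithm \cite{Kraus2010_PRL}, which reduces LU-equivalence of two $n$-qubit states to a decidable finite system. Because the reduced single-qubit states of both graph states are maximally mixed, the algorithm's normal-form/stabiliser-subgroup conditions become quite rigid; one shows the candidate local unitaries are forced (up to Clifford freedom) to lie in a small set, and then checks none of them maps $\ket{L}$ to $\ket{R}$. The LC-inequivalence already established by Bouchet's algorithm handles the Clifford part, so what remains is to rule out the genuinely non-Clifford local unitaries --- and here the constraint from the marginals being Clifford-like forces the relevant local unitaries to be Clifford, closing the gap.

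The main obstacle is making the last step rigorous rather than numerical: either one must trust an exact symbolic evaluation of the chosen polynomial invariant, or one must carry out the Kraus-algorithm reduction carefully enough to argue that \emph{no} product unitary --- not merely no local Clifford --- intertwines the two states. I expect the cleanest argument to be: show that for these particular graph states the structure of the stabiliser and of the two-body marginals forces any LU equivalence to be an LC equivalence (an argument in the spirit of \cite{nestLocalUnitaryLocal2005} and the ``only Clifford'' results for small/structured graph states), and then invoke the already-cited Bouchet computation that $L$ and $R$ are LC-inequivalent.
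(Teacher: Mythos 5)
There is a genuine gap: your proposal is a survey of possible strategies rather than a proof, and none of the three routes is actually carried out. Route one (moments or spectra of the marginals) fails for the reason you yourself note --- for stabiliser states all quantities of the form $\Tr(\rho_M^j)$ are determined by $\lvert \SSS_M\rvert$, which agrees for $L$ and $R$ by hypothesis --- and you then gesture at an unspecified ``genuinely non-stabiliser LU-invariant'' without identifying or evaluating one. Route two (the Kraus algorithm) and route three (prove LU $\Rightarrow$ LC for these states and invoke Bouchet) are both plausible in principle, but you verify none of the hypotheses needed to make them work; in particular the LU--LC implication is known to fail in general, so asserting that ``the constraint from the marginals being Clifford-like forces the relevant local unitaries to be Clifford'' is exactly the claim that would need proof.

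The idea you are missing is that the reduced stabiliser groups carry more LU-usable information than their \emph{orders}: they carry the actual Pauli letters. The paper's proof picks two four-element subsets that overlap on qubit $1$. For $M=\{1,2,3,5\}$ the unique nontrivial element of $\SSS_M$ acts as $Z$ on qubit $1$ in $L$ but as $Y$ on qubit $1$ in $R$; since by \cref{eq:AdNew} the marginal is built from $\SSS_M$, any product unitary $U=\bigotimes_i U_i$ with $U\ket{L}=\ket{R}$ must satisfy $U_1 Z U_1^\dagger = Y$ up to phase. For $M'=\{1,4,5,7\}$ both graphs have the same nontrivial element acting as $Z$ on qubit $1$, forcing $U_1 Z U_1^\dagger = Z$ up to phase. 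These two constraints on the single unitary $U_1$ are incompatible, so no such $U$ exists. This is an entirely finite, local-Pauli-conjugation argument that requires neither polynomial invariants nor any LU-to-LC reduction; your proposal never reaches it.
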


\begin{proof}
    
Let us compare the stabilizer of marginal $\rho_{\{1,2,3,5\}}$ for both $L$ and $R$. We find them to be stabilized by
\begin{align}
\SSS_{\{1,2,3,5\}}^{L} &= \{\Id,g_2^L\} =\{ \Id, Z_1 X_2 Z_3 Z_5 \}\label{stab_1},\\
\SSS_{\{1,2,3,5\}}^{R} &= \{\Id,g_{1}^R g_{2}^R g_{3}^R g_{5}^R\} = \{\Id,- Y_{1} Y_{2} Y_{3} Y_{5}\}, \label{stab_2}
\end{align}
for $L$ and $R$, respectively; the last equality holds since $XZ = -iY$.

Assume now that there exists a local unitary operation $U = U_1 U_2 U_3 \cdots U_n$ such that $U \ket{L} = \ket{R}$. 
Notice that 
\[
U_1 U_2 U_3 U_5 \,\rho_{\{1,2,3,5\}}^L \,U_1^\dagger U_2^\dagger U_3^\dagger U_5^\dagger  = \rho_{\{1,2,3,5\}}^R ,
\]
and by \cref{eq:AdNew} and comparing \cref{stab_1} and \cref{stab_2}, we then know that $U_1 Z U_1^\dagger = Y$ up to a phase.

On the other hand, for the stabilizer of the marginal $\rho_{\{1,4,5,7\}}$ we find:
\begin{align*}
\SSS_{\{1,4,5,7\}}^{L} &= \{ \Id,g_{4}^{L} \}= \{\Id,Z_{1} X_{4} Z_{5} Z_{7} \}\\
\SSS_{\{1,4,5,7\}}^{R} &= \{\Id,g_{4}^{R} \}= \{\Id,Z_{1} X_{4} Z_{5} Z_{7}\}.
\end{align*}
Hence, $U_1Z U_1^\dagger=Z$ up to a phase. 
This is a contradiction to $U_1Z U_1^\dagger = Y$ up to a phase.
\end{proof}

\begin{figure}
    \centering
    \includegraphics[width = 0.8\linewidth]{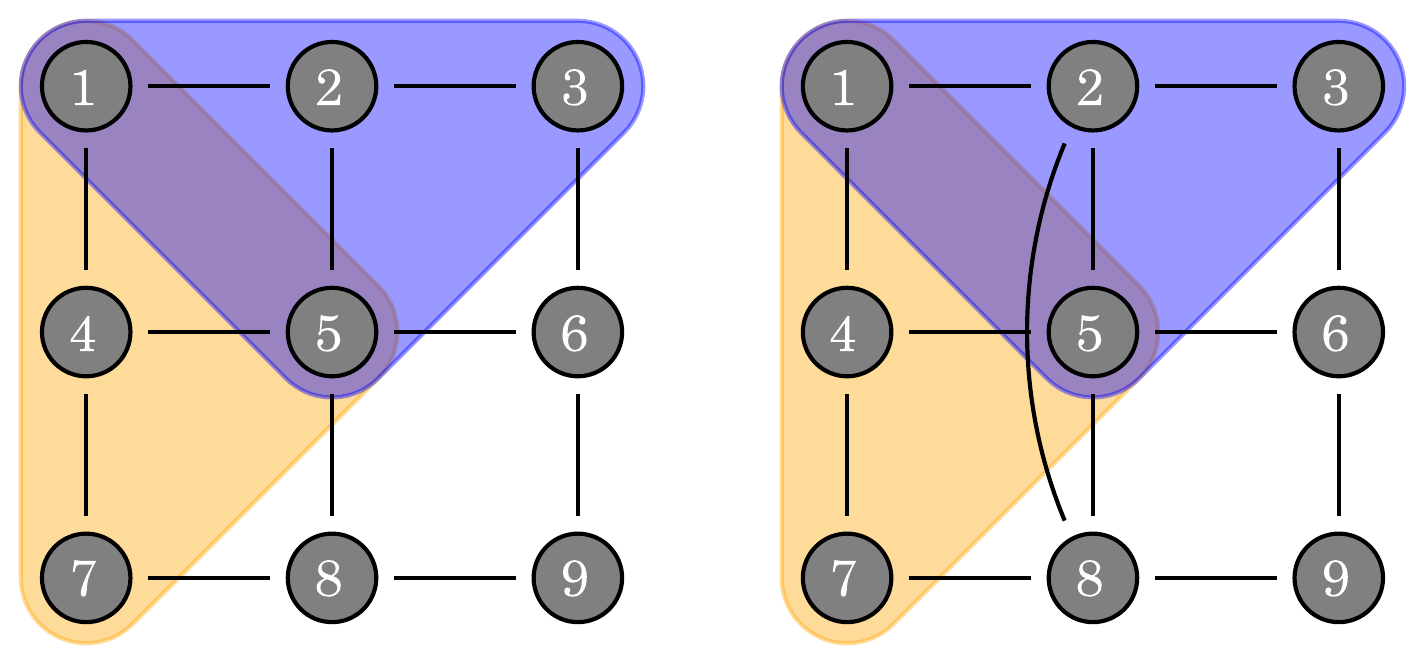}
    \caption{Two nine-qubit graph states $L$ (left) and $R$ (right) that are not LU-equivalent. 
    The marginals $\rho_{\{1,2,3,5\}}$ and $\rho_{\{1,4,5,7\}}$ are highlighted in blue and orange, respectively.
    }
    \label{fig:example_9qubit}
\end{figure}

We have found other examples of graphs that have the same signatures but are not LC-equivalent, e.g.~those shown in \cref{fig:example_peterson}. 
Notably, the example in that figure shows that elements from different LU-orbits but the same LU-class can have the exact same marginal structure.
These are potential candidates for counter-examples to the LC-LU conjecture with a relatively low number of qubits. However, checking all equivalence classes of isomorphic LC-orbits, there are no other examples of 9 nodes, and all examples of 10 nodes can be shown to be LU-inequivalent using the method described in this section. Thereby, we increased the bound for the maximum number of nodes for which the LU-equivalence of unlabeled graphs is identical to the LC-equivalence from 8 to 10.


\begin{figure}
    \centering
    \includegraphics[width = 0.8\linewidth]{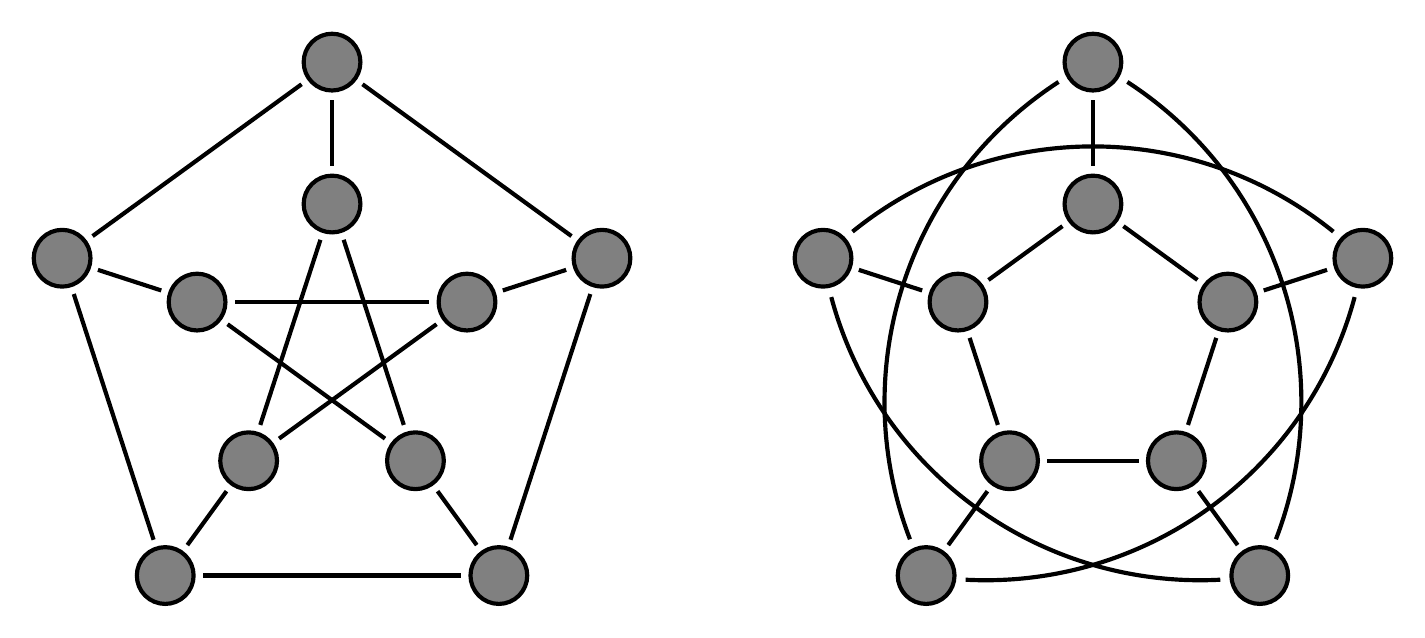}
    \caption{The graph on the left, known as the \textit{Peterson graph}, and the graph on the right have the same \ranktensor{}{k} for all $k$. The graphs are related by permutation of the `inner' and `outer' nodes, so they belong to the same class. However, they do not belong to the same $LC$-orbit.
    }
    \label{fig:example_peterson}
\end{figure}

\section{Tools for characterizing LC-orbits: Condensed Graphs} \label{sec:lctools}

With the tools introduced in the previous section, we can detect many structures in graphs. However, due to limits in computational power, we are restricted to certain marginal orders.  In this section we discuss how to condense graphs such that we can detect some of the structures with marginals of lower order. Condensed graphs were introduced in Ref.~\cite{burchardt2023foliage} as foliage graphs together with a specific condensation rule. Ref.\ \cite{zhang2023bell} showed that the condensation rule can get relaxed for labeled graphs. After discussing the previous results, we present a generalization of condensing graphs for labeled and unlabeled graphs.
It should be stressed that the tools derived in this section can only prove inequivalence under LC operations, 
and not under more general LU-operations. 

By a condensed graph, we mean a graph where we combine a set of nodes of the initial graph into one node. The adjacency between the new node and the rest of the graph is derived by the adjacency of the set of nodes that got condensed into one node and the rest of the graph. We define a condensed graph as follows:

\begin{Definition}
    Consider a graph $G=(V,E)$ and a set $C \subseteq V$. The \textit{condensed graph} $G_C =(V_C, E_C)$  consists of the node set $V_C = \{ c \} \cup ( V \setminus C)$
    and edge set $E_C$ defined in the following way: $(i,j) \in E_C$ if either $i,j\in V \setminus C$ and $(i,j) \in E$, or $j=c$ and there exists $s\in C$ such that $(i,s) \in E$.
\end{Definition}
The definition of a condensed graph with respect to multiple sets $\CC = \lbrace C \rbrace$ can be formulated analogously. 

Some choices of condensation sets $C$ preserve LC-equivalence of graphs: 
\begin{Lemma} \label{lem:condensed}
    Consider two graphs $G$ and $G'$ and a two-node condensation set $C$ such that $\stabdim{C} = 1$. If 
    $G$ and $G'$ are LC-equivalent, it follows that $G_c$ and $G'_c$ are LC-equivalent.
\end{Lemma}
Recall that $d_C$ is the marginal dimension of the marginal set $C$ as defined in \cref{eq:d_M}.
This statement was proven for the special case $\CC = \lbrace C \vert \stabdim{C} = 1 \rbrace$ in Ref.~\cite{burchardt2023foliage} and for general two-node sets in Ref.~\cite{zhang2023bell}. Note that we can 
use the negated statement to exclude LC-equivalence: If  the condensed graphs $G_c$ and $G'_c$ are LC-inequivalent, also the initial graphs  $G$ and $G'$ are LC-inequivalent.

While on labeled graphs, the condensation can be done on specific sets, for unlabeled graphs it is challenging to find the same sets on two nodes. 
Therefore one  has to  find extra conditions like condensing all sets with a certain property. Practical properties can be ``all sets $C$ of size two and $\stabdim{C} = 1$'', as in \cite{burchardt2023foliage} or   ``all sets $C$ of size two and $\stabdim{C} = 1$ which belong to a cluster of size $k$''.
An example is shown in \cref{fig:ex_cond}.

\begin{figure}
    \centering
    \includegraphics[width=.9\linewidth]{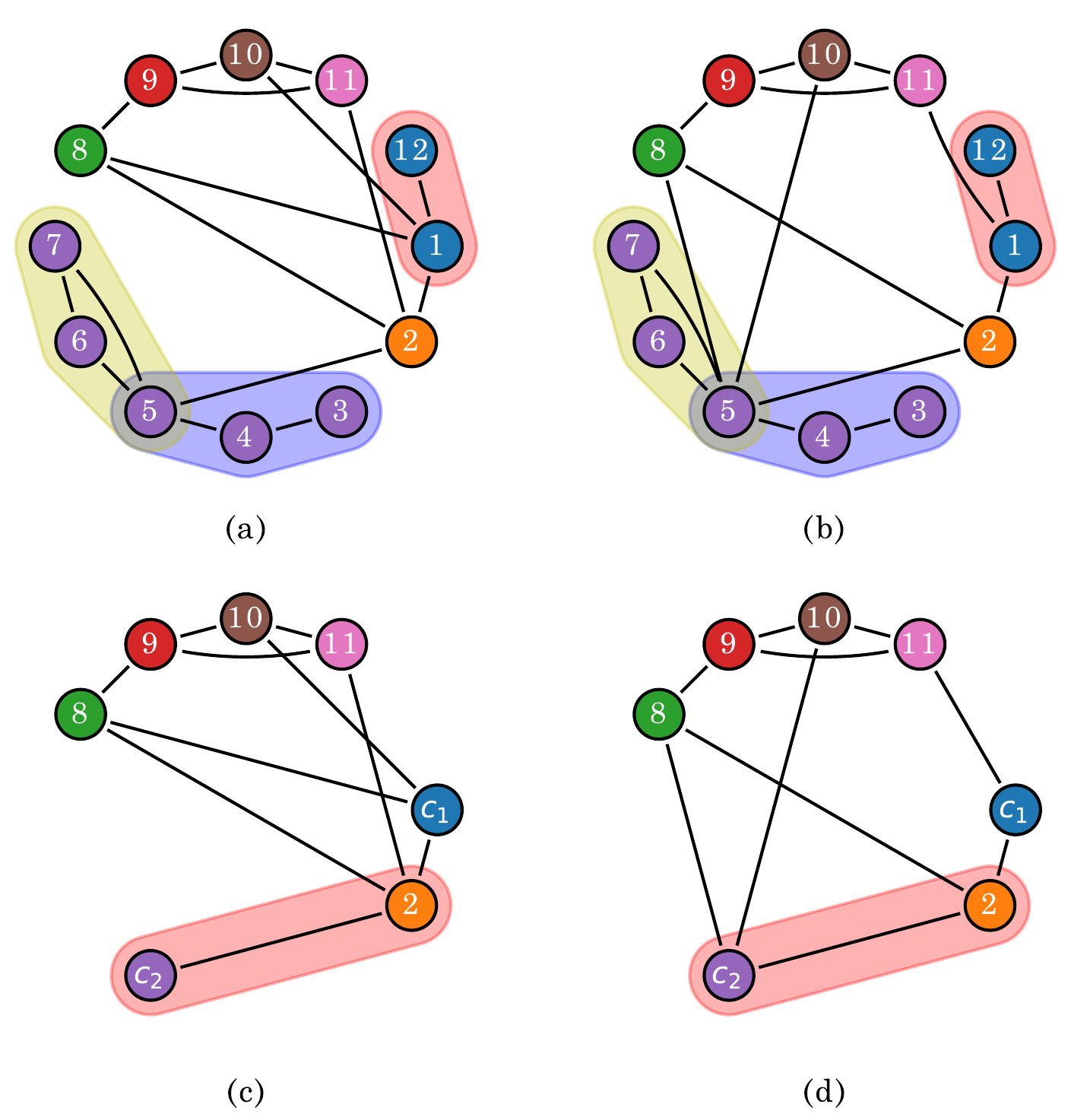}
    \caption{\textbf{$(a), (b)$} Two 12-node graphs which have all marginal dimensions of two and three node sets equal, that is $\fixeddimensionsset{2}{G}{i} = \fixeddimensionsset{2}{G'}{i}$ and $\fixeddimensionsset{3}{G}{i} = \fixeddimensionsset{3}{G'}{i}$ for all $i < 2$ and 3, respectively. \textbf{$(c), (d)$} After condensation of sets  $\{1,12\}$ and $\{3,4,5,6,7\}$, we see that the condensed 7-node graphs differ by their marginal dimension $\stabdim{2,c_2}$. Therefore, the two initial graphs are not LC-equivalent.}
    \label{fig:ex_cond}
\end{figure}

We show that the condensation sets can be chosen more generally. 
\begin{Lemma} \label{lem:condens_big}
    Given two graphs $G$ and $G'$ and a condensation set $C$ such that $\stabdim{C} = \vert C \vert - 1$.  If 
    $G$ and $G'$ are LC-equivalent, it follows that $G_c$ and $G'_c$ are LC-equivalent.
\end{Lemma}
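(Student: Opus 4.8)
The plan is to generalise the argument behind \cref{lem:condensed} from two-node condensation sets with $\stabdim{C}=1$ to arbitrary sets $C$ with $\stabdim{C}=\vert C\vert-1$. The key structural fact to exploit is that $\stabdim{C}=\vert C\vert-1$ is the maximal possible value of $\stabdim{C}$ for a connected graph (as noted after \cref{eq:d_M}), and it means that the reduced stabiliser $\SSS_C$ has corank one inside the full single-qubit-per-vertex picture: by $\stabdim{C}=\vert C\vert-1 = \vert C\vert - \mathrm{rank}(\Gamma_{C,C^\perp})$ we get $\mathrm{rank}(\Gamma_{C,C^\perp})=1$. So every vertex outside $C$ sees the vertices of $C$ through one and the same (up to scalar, but we are over $\mathbb{F}_2$ so literally the same) nonzero neighbourhood-pattern $\chi\in\mathbb{F}_2^C$, or sees none of them. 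In metagraph language: $G_C$ has exactly one non-isolated type-2 node. This is precisely the condition that makes $C$ behave, after a suitable sequence of local complementations, like a single effective vertex: I would show that $C$ can be brought by local complementations supported on $C\cup\NN{C}$ into a "star-like" normal form in which a single vertex $s\in C$ carries the entire external connectivity pattern $\chi$ and the other vertices of $C$ are attached only internally, mirroring how a $\stabdim{}=1$ pair is the foliage/leaf situation of \cite{burchardt2023foliage}.

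Concretely, the steps I would carry out are: (i) From $\mathrm{rank}(\Gamma_{C,C^\perp})=1$, extract the common external pattern $\chi$ and pick $s\in C$ with $\chi(s)=1$ (if $\chi=0$ then $C$ is a connected component, excluded). (ii) Using local complementations inside $C$ (which never change adjacency to the outside beyond what is forced, and crucially preserve LC-equivalence), reduce the induced subgraph on $C$ together with the attachment of $\chi$ to a canonical form; the point is that LC on graphs restricted to $C$ acts transitively enough on graphs with the corank-one external constraint that all of $C$'s internal structure becomes "invisible" to the condensation, i.e.\ it maps to the same $G_C$ regardless. (iii) Show that an LC-equivalence $G\sim G'$ restricts/descends to an LC-equivalence $G_C\sim G'_C$: decompose the local complementations of the equivalence into those acting on vertices of $V\setminus C$ (which pass through the condensation essentially unchanged, acting on $c$ or on outside vertices as appropriate) and those acting on vertices of $C$ (which, by step (ii), act trivially on $G_C$ up to LC, because any local complementation at a vertex $i\in C$ only rewires $\NN{i}$, and the outside of $C$ all shares the single pattern $\chi$, so the effect on the condensed vertex $c$ is either nothing or a single local complementation at $c$). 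I would treat a local complementation at an outside neighbour $v$ of $C$ with care: it adds a clique on $\NN{v}$, which includes whichever vertices of $C$ lie in $\chi$; after condensation these all collapse to $c$, so the induced operation on $G_C$ is exactly a local complementation at $c$ (the self-loops that would formally appear on $c$ are discarded, consistently with simple graphs).

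The main obstacle I expect is step (ii)/(iii): making precise that local complementations at vertices \emph{inside} $C$ induce a well-defined (trivial-up-to-LC) action on the condensed graph. Unlike the $\vert C\vert=2$, $\stabdim{C}=1$ case — where $C$ is literally a pair of "twin/leaf" vertices and the foliage lemma of \cite{burchardt2023foliage} handles it directly — for larger $C$ one local complementation at $i\in C$ can toggle many internal edges of $C$ and also toggle edges between $C$ and the outside (since $\NN{i}$ may contain outside vertices). I would control this by proving an invariance: the corank-one condition $\mathrm{rank}(\Gamma_{C,C^\perp})=1$ is itself preserved by any local complementation at $i\in C$, and the resulting external pattern either stays $\chi$ or becomes $\chi$ shifted in a way that is undone by at most one local complementation at $c$ in $G_C$. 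Equivalently, I would phrase the whole argument on the stabiliser side, paralleling the proof of \cref{Lemma:LHS_vs_RHS}: track how $\SSS_C$ (which here is a maximal abelian subgroup of corank one) transforms under the local Cliffords of the equivalence, and use that a local Clifford supported on $C$ maps $\SSS_C$ to the reduced stabiliser of the LC-image, whose metagraph is again single-non-isolated-type-2-node, hence again a valid condensation; pushing this forward yields the claimed LC-equivalence of $G_C$ and $G'_C$. A careful bookkeeping of phases and of the discarded self-loops on $c$ will be the fiddly part, but no genuinely new idea beyond \cite{burchardt2023foliage,zhang2023bell} and \cref{Lemma:LHS_vs_RHS} should be needed.
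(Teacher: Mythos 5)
Your proposal follows essentially the same route as the paper's proof: first establish that $\stabdim{C}=\vert C\vert-1$ is equivalent to all outside neighbours of $C$ sharing one and the same adjacency pattern $\chi$ in $C$ (the paper's ``exactly one non-empty neighbourhood'', which it derives by counting stabiliser elements rather than via $\mathrm{rank}(\Gamma_{C,C^{\perp}})=1$, but the content is identical), and then check case by case that a single local complementation at a vertex outside $C\cup\NN{C}$, inside $C$, or in $\NN{C}$ descends to a local complementation (or the identity) on the condensed graph. One correction and one simplification: a local complementation at an outside neighbour $v\in\NN{C}$ descends to $\LC_v$ on $G_C$ (the toggled pairs inside $B$ collapse to a discarded self-loop on $c$), not to $\LC_c$ --- it is a local complementation at a vertex of $B$ that induces $\LC_c$ --- and your normal-form step (ii) is unnecessary, since the condensed graph is by construction blind to the internal structure of $C$ and only the preservation of the single external pattern needs to be tracked.
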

Note that \cref{lem:condensed} is a special case of \cref{lem:condens_big}, where $\vert C \vert = 2$.
In the following we present a proof idea. For the complete proof, see \cref{sec:proof_condens}.

Proof idea:
We first show that every  set $C$ with $\stabdim{C} = \vert C \vert - 1$ has exactly one neighborhood. That is, there is a neighborhood $\NN{C}$ of nodes connected to nodes in $C$ such that $C$ can be composed into two disjoint sets of isolated nodes $C_I$ and nodes which are connected to the neighborhood $C_N$ such that for all $i \in C_N$ and all $n \in \NN{C}$ we have that $(i,n) \in E$.
We then show that for two graphs which are LC-equivalent also the condensed graphs are LC-equivalent.

There might be further conditions for condensation graphs, which we present in the following conjecture.

\begin{Conjecture} \label{lem:cond_neighb}
    Given two graphs $G$ and $G'$ and a condensation set $C$ such that each node in $C$ is connected to at most one node in the neighborhood in  $V\setminus C$. 
     If $G$ and $G'$ are LC-equivalent, it follows that $G_c$ and $G'_c$ are LC-equivalent.
\end{Conjecture}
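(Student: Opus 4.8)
\textbf{Proof proposal for \cref{lem:cond_neighb}.}

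The plan is to generalise the structural analysis already used in the proof idea of \cref{lem:condens_big}. The hypothesis there -- that $\stabdim{C} = \vert C \vert - 1$ -- forces $C$ to have a single common neighbourhood $\NN C$ with every non-isolated node of $C$ attached to \emph{all} of $\NN C$. The weaker hypothesis here -- that each node of $C$ touches at most one node of $V \setminus C$ -- does not give a common neighbourhood, but it does give a very rigid local picture: each $v \in C$ has a ``port'' $p(v) \in V \setminus C$ (or no port, if $v$ is internal to $C$), and all edges leaving $C$ are of the form $(v, p(v))$. First I would set up this notation and observe that the condensed graph $G_c$ is then obtained by identifying the ports-as-seen-from-$C$: the new vertex $c$ is adjacent to exactly the set $\{ p(v) : v \in C, p(v) \text{ defined}\}$.

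The core of the argument is to show that any local complementation on $G$ can be mimicked by a sequence of local complementations on $G_c$ (and vice versa), so that the two LC-orbits map onto each other. I would split into cases according to where the complementation vertex $i$ lies. Case 1: $i \in V \setminus C$ and $i$ is not a port of any node in $C$. Then $\NN i$ is disjoint from $C$ except possibly at ports, and I need to check that $\LC_i$ commutes appropriately with the condensation; the edges inside $C$ and between $C$ and its ports are untouched, so $(\LC_i(G))_c = \LC_i(G_c)$ directly. Case 2: $i$ is a port, $i = p(v)$ for some $v \in C$. Here $\LC_i$ toggles edges among $\NN i$, which includes $v$ and possibly other nodes of $C$ as well as nodes of $V\setminus C$; I would argue that the effect on $G_c$ is $\LC_c$ composed with local complementations at the other ports, using that $c$ in $G_c$ plays the combined role of all of $C$. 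Case 3: $i \in C$. This is the delicate case: complementing at an internal or boundary vertex of $C$ rearranges edges strictly inside $C$ or between $C$ and a single port, and I must show this either leaves $G_c$ invariant or is realised by an $\LC$ sequence confined to $\{c\}\cup(\text{ports})$. The key subtlety is that complementing at $i \in C$ can change \emph{which} node of $C$ is adjacent to a given port, or create/destroy the adjacency of $c$ to that port in $G_c$ -- and I would need the at-most-one-port condition to guarantee this change is consistent, i.e. that the reduced stabiliser structure $\SSS_C$ (equivalently the metagraph $G_C$, via \cref{eq:stab_set_meta}) is preserved in a way that pins down the condensed adjacency.

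I expect the main obstacle to be precisely Case 3: unlike the $\stabdim C = \vert C\vert - 1$ situation, the condition ``at most one external neighbour per node'' permits many distinct internal structures on $C$, and it is not obvious that these are all LC-equivalent \emph{as graphs-with-ports}, nor that the induced maps on $G_c$ are LC operations rather than something more general. One natural line of attack is to reduce $C$ to a canonical form under LC operations supported on $C \cup \NN C$ -- for instance showing every such $C$ is LC-equivalent to one where $C$ is an independent set or a star -- and then verify the lemma only for the canonical form, transporting the result back. If that canonical-form reduction is itself compatible with condensation, the statement follows. This is exactly the step where the conjecture might fail in general, which is presumably why the authors state it as a conjecture rather than a lemma; a full resolution would likely require either a clever invariant of $\SSS_C$ that survives condensation, or an explicit counterexample showing the at-most-one-port hypothesis is too weak.
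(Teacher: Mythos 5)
There is nothing to compare against: the paper does not prove this statement. It is deliberately stated as a \emph{conjecture} (the text preceding it says ``There might be further conditions for condensation graphs, which we present in the following conjecture,'' and the appendix only collects counterexamples to \emph{other} candidate rules). So your proposal cannot be matched to a paper proof, and on its own terms it is a proof sketch with an acknowledged hole, not a proof.

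The concrete gap is the one you half-identify yourself, but it is worse than ``Case 3 is delicate.'' The whole case-by-case strategy, modelled on the proof of \cref{lem:condens_big}, requires that the hypothesis on $C$ be stable along a path of local complementations connecting $G$ to $G'$, so that each single step can be analysed. In \cref{lem:condens_big} this stability is automatic because the hypothesis $\stabdim{C}=\abs{C}-1$ is a marginal dimension and hence an LC-invariant; part (c) of that proof explicitly verifies that the single-neighbourset structure survives complementation at a node of $B$. Your hypothesis --- each node of $C$ has at most one neighbour in $V\setminus C$ --- is \emph{not} LC-invariant: complementing at a port $i=p(v)$ toggles all edges among $\NN{i}$, which generically creates new edges between nodes of $C\cap\NN{i}$ and external nodes of $\NN{i}$, so after one step the intermediate graph no longer satisfies the at-most-one-port condition and your Cases 1--3 no longer apply to the next step. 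This also infects Case 2 directly, not just Case 3: the claim that $\LC_i$ on $G$ is realised by $\LC_c$ composed with complementations at ports on $G_c$ is asserted but not argued, and the edge toggles between two external neighbours of $i$ that are ports of \emph{different} nodes of $C$ versus the single condensed adjacency to $c$ do not obviously match up. Your fallback (reduce $C$ to a canonical form under LC operations supported on $C\cup\NN{C}$) is a reasonable research direction, but as written it is a plan, not an argument; until either that reduction is carried out or a counterexample is found, the statement remains exactly what the paper says it is: open.
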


It would be interesting to find more condensation rules. 
In \cref{sec:proof_condens} we show a list of some trials together with counter examples which show that they do not work.

\section{Computational Complexity} \label{sec:complex}

In this section we compare the presented methods to existing algorithms and discuss their computational complexity. Note that we compute the complexity of the algorithms we used, which is in general an upper bound to the complexity of the best algorithm. 

It is possible to decide LU-equivalence by solving a finite set of polynomial equations \cite{Kraus2010_PRA,Maciazek2013howmany}. However, there is no efficient algorithm known. 
The Bouchet algorithm which can be used to determine LC-equivalence of two graphs has a computational complexity of $\OO(n^4)$. Deciding pairwise LC-equivalence using Bouchet's algorithm scales quadratically in the number of graphs. The presented methods to decide LU-equivalence scale linearly in the number of graphs. The complexity of our methods for $k \geqslant 2$ is upper bounded by $\OO (k^{\frac{3}{2}-k} n^{k+1})$. However, for graphs of 9 or more nodes, our methods are only approximations.  

In \cref{sec:tools}, we presented the marginal dimension $\stabdim{M}$ of a subset of nodes $M$. Although calculating any marginal state of an arbitrary quantum state takes exponentially many steps, the complexity to compute $\stabdim{M}$ for a given set $\abs{M} = k$ on a graph of $n$ nodes is $ \OO ((n - k) k^2)$. There are $\binom{n}{k}$ subsets $M \subseteq V$ of size $k$, so by using the Stirling approximation we find that the complexity to compute $\stabdim{M}$ for all subsets $\abs{M} = k$ on a graph of $n$ nodes is $ \OO (k^{\frac{3}{2}-k} n^{k+1})$.

In \cref{sec:rankinvar}, we introduced the measures $\fixeddimensionsset{k}{G}{i}$, $\ranklist{G}{k}$ and $\ranktensor{G}{k}$. Their computational complexity is the same as computing $\stabdim{M}$ for all subsets $\abs{M} = k$. For the measure $\tensoreig{}{k}$, additionally the eigenvalues of an $n \times n$ Hermitian matrix needs to be computed, which in practise takes $\OO (n^{3})$. Therefore, to calculate $\tensoreig{}{k}$ given $\ranktensor{G}{k}$ takes $\OO (n^{3})$, while calculating $\tensoreig{}{k}$ without $\ranktensor{G}{k}$ takes $\OO (n^{3})$ for $k = 1$ and $ \OO (k^{\frac{3}{2}-k} n^{k+1})$ for $k \geqslant 2$.

In \cref{sec:lctools}, we discussed graph condensation. For the condensation rule given in \cref{lem:condens_big}, we computed the marginal dimensions $\stabdim{M}$ of all sets $M$ of a certain size $k \geqslant 2$. Then, we condense  the graph, which can be done with a loop over the adjacency matrix in $\OO(n^2)$ steps. Therefore, also for condensation, the complexity is  $ \OO (k^{\frac{3}{2}-k} n^{k+1})$. Note that condensed graphs have fewer nodes and are therefore computationally easier to compare. Condensed graphs can get compared using the other methods presented in this paper or get further condensed.

In practice, different methods should be combined. To decide LU-equivalence of two graphs, given the complexity of the methods, it is reasonable to first test for LC-equivalence using the Bouchet algorithm. 
If the graphs are LC-equivalent, LU-equivalence follows immediately. If the graphs are not LC-equivalent, LU-equivalence can be tested by computing marginal dimensions.
Only if the graphs are not LC-equivalent and LU-equivalence cannot be excluded by comparing the marginal dimension, more advanced techniques have to be applied. Options are graph condensation as introduced in \cref{sec:lctools},  comparing stabilizer operators as discussed in \cref{sec:examples}, or solving the set of polynomial equations from Ref.\ \cite{Kraus2010_PRL}.
For deciding LU-equivalence of a sufficiently large set of graphs, it is however recommendable to first compute the marginal dimensions of all graphs and test for LC-equivalence only if LU-equivalence cannot be excluded. This is due to the fact that our methods scale linearly in the number of graphs while Bouchet's algorithm scales quadratically.

\section{Summary and Outlook} \label{sec:sum}

This paper aims at advancing the understanding of the relationship 
between the marginal state properties and the LU-equivalence of 
graph states. For that, we introduced an invariant under local unitaries related to the entanglement entropy.  We showed how the invariant can be computed in several ways: from the adjacency matrix, the rank of marginal states, stabilizer properties, and also from geometric graph properties. We then demonstrated how it can be used to distinguish LU-orbits of graph states.

We have shown that the methods to distinguish LU-orbits work perfectly up to 8 qubits. There exist graphs of 9 or 10 qubits for which our methods fail to distinguish graphs, even though they are known to be in different LC-orbits. For unlabeled graphs,  we have shown their LU-inequivalence by another method, concluding that up to 10 qubits no graphs from different isomorphic LC-orbits are LU-equivalent.
It was previously known that there are no counterexamples to the LU-LC conjecture for unlabeled graph states of 8 qubits or smaller~\cite{cabelloEntanglementEightqubitGraph2009}. 
Our computations prove the LU-LC conjecture holds at least up to 10 qubits for unlabeled graphs. Furthermore, we found out that the bound of 8 is true for labeled graphs, which were not considered previously, as well. Most likely the LU-LC conjecture is even true up to 10 qubits in the labeled graphs, as the only possible counterexamples of that size would have to be isomorphic but LC-inequivalent, which is extremely unlikely.

Based on our results, it is both possible to distinguish graph state orbits by computer-implemented algorithms and (often) intuitively by visual inspection of their graphs. We further showed how the problem of deciding LC-equivalence can be mapped to condensed graphs with fewer nodes than the original graphs. 

The methods that we have presented make use of the concept of tensors \ranktensor{G}{k}. However, it is worth mentioning some alternative concepts such as the sector length \cite{Wyderka_2020} and cut-rank \cite{NGUYEN2020103183,10.1007/11940128_64} of a graph.

Further research could focus on gaining a better understanding of marginal orbits, particularly in terms of their advantages over computing marginal dimensions of sets with more than three nodes. Additionally, it would be interesting to discover more general condensation rules. 

In this paper, we restricted ourselves to local operations being unitary.
In general, local operations may include both unitary operations and measurements.  
Although the problem is addressed in several works \cite{Hahn_2019_network_routing,Mannalath2023multiparty,dejong2023extracting,brand2023quantum,szymanski2024useful},
the structural relationship between which orbits of states can be transformed into which other orbits of states is not yet well understood. 
It would be interesting to investigate whether our methods can be extended to this more general case.

\section*{Acknowledgments}
We thank  Jan L.\ B\"onsel, Kiara Hansenne, Lucas E.\ A.\ Porto, and Fabian Zickgraf for discussions. This work was supported by 
the Deutsche Forschungsgemeinschaft  (DFG project numbers 447948357, 440958198 and the Emmy Noether grant 418294583), the Sino-German Center for Research Promotion (Project M-0294), the ERC (Consolidator Grant 683107/TempoQ), the German Ministry of Education and Research (Project QuKuK, BMBF Grant No. 16KIS1618K), the Stiftung der Deutschen Wirtschaft, the European Union via the Quantum Internet Alliance project (Project ID 101102140), the BMWK-funded project Qompiler, and NWO Vidi grant (Project No VI.Vidi.192.109).
This work is co-funded by the European Union (ERC, ASC-Q,
101040624). Views and opinions expressed are however those of the authors
only and do not necessarily reflect those of the European Union or the
European Research Council. Neither the European Union nor the granting
authority can be held responsible for them.

\onecolumngrid
\appendix

\section{Orbits of Labeled and Unlabeled Graphs}\label{app:labelvsunlabel}

To illustrate the differences between labeled and unlabeled graphs, we present the four-qubit linear cluster state $\ket{L_{\mathrm{1234}}}$, shown in the upper left corner of \cref{fig:33graphslabeled}, as an example. 
For denoting line graphs, we use the notation $L_{\mathrm{1234}}$, where the index indicates the order of labels in the graph. That is, $L_{\mathrm{1234}}$ is a four node graph with edge set $\{(1,2),(2,3),(3,4)\}$.
In total, there are 33 different graphs that are associated with the line graph through local complementations and permutations of the nodes shown in \cref{fig:33graphslabeled}. Indeed, there are 12 distinct permutations of the line graph, and 21 other connected four-qubit graphs that can be obtained by a series of local complementations on at least one of these 12 distinct permutations. Depending on whether the graphs are considered as labeled or unlabeled graphs, they can be grouped differently into LU-orbits. The two different groupings are shown in \cref{fig:33graphslabeled,fig:33graphsunlabeled} and described in detail below.

When the 33 graphs are considered as labeled graphs, they fall into three distinct sets of LU-orbits, shown in \cref{fig:33graphslabeled}.  We note that certain permutations of the line graph, although distinct graphs, can be obtained from the graph $L_{\mathrm{1234}}$ by local complementations. Indeed, consider the graph $L_{\mathrm{1243}}$, shown on the 7th position of the first row in \cref{fig:33graphslabeled}. 
The graph $L_{\mathrm{1243}}$ is related to $L_{\mathrm{1234}}$ by a local complementation on node $4$ followed by a local complementation on node $3$, showing that the two associated graph states are in each others LU-orbit.

Not all permutations of $L_{\mathrm{1243}}$ fall into its associated LU-orbit. Indeed, the graph state $\ket{L_{\mathrm{1432}}}$ is not LU-equivalent to $\ket{L_{\mathrm{1243}}}$. This follows e.g.~from the fact that the LU-invariant stabilizer dimension $\stabdim{\{1,2\}}$, defined in \cref{eq:d_M}, is not the same for the two states. Hence, the graph state $\ket{L_{\mathrm{1432}}}$ gives rise to its own LU-orbit; which is equally large as the LU-orbit of $\ket{L_{\mathrm{1234}}}$ (a fact that follows from a symmetry argument).

This results in three separate LU orbits, that each contain some permutations of the linear cluster state $\ket{L_{\mathrm{1234}}}$. These three LU-orbits of labeled graphs are shown in \cref{fig:33graphslabeled}.

\begin{figure}
    \centering
    \includegraphics[width=0.9\linewidth]{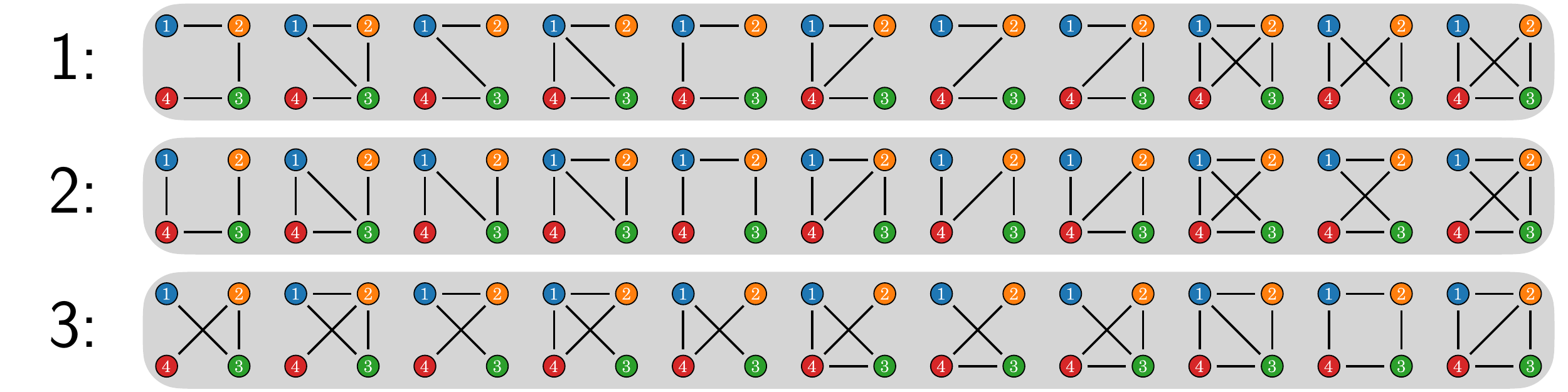}
    \caption{
    The three separate LU-orbits associated with the four-qubit labeled linear cluster state. Each row depicts a distinct LU-orbit, which is the collection of graph states that can be obtained by local unitary operations. Note that this includes the graph states that are associated with those graphs that are obtained after a permutation of the nodes of the original graph.
    Not all permutations lead to graph states that are in the same LU-orbit.
    }
    \label{fig:33graphslabeled}
\end{figure}

When the graphs are considered as unlabeled, there are effectively only four distinct graphs left out of the 33. These four graphs together form the LU-orbit of the unlabeled linear cluster state, and are shown on the left in \cref{fig:33graphsunlabeled}. With each of these graphs there are a selection of the other 29 graphs associated; those that are associated with the same graph are grouped together in the same figure.

\begin{figure}
    \centering
    \includegraphics[width=0.9\linewidth]{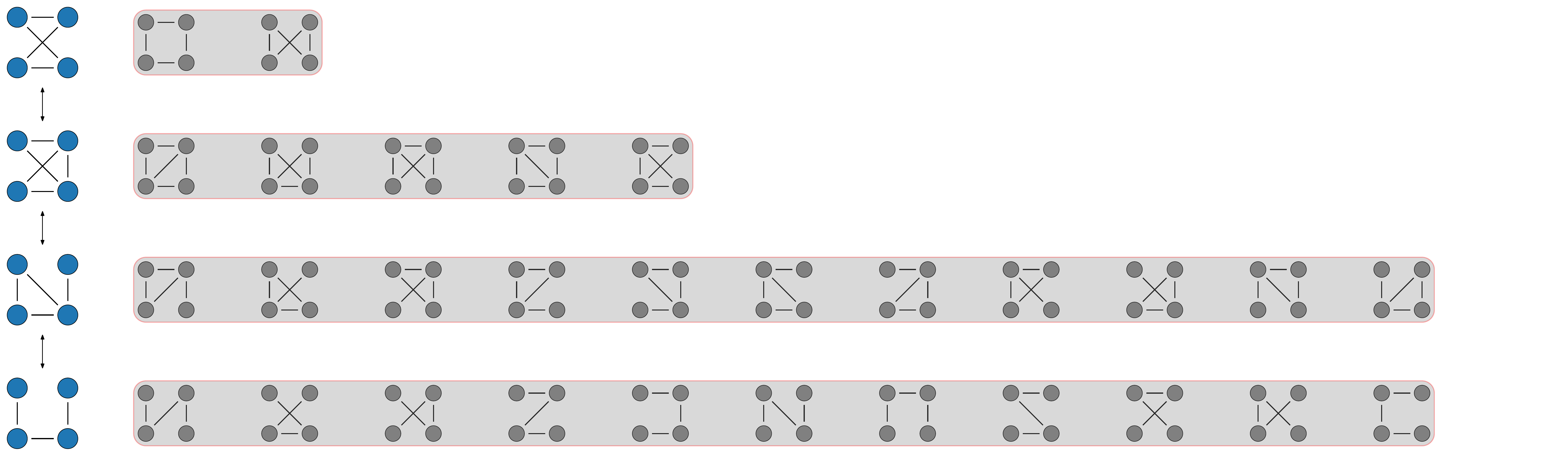}
    \caption{
    The orbit of the unlabeled line graph consists of only four graphs, shown in color on the left. Each of these unlabeled graphs has several other graphs associated with it, which are the permutations of the unlabeled graph and are therefore counted as the same. Note that each of the 33 labeled graphs from \cref{fig:33graphslabeled} can be identified with either one of the four unlabeled colored graphs on the left, or one of the 29 permutations in the gray boxes.}
    \label{fig:33graphsunlabeled}
\end{figure}

\section{Proof of Lemma 11} \label{app:Lemma2_proof}
In this section we prove \cref{Lemma:LHS_vs_RHS}, which establishes the following relation between the stabilizer set of a marginal state $\rho_M$ and its sub marginals:
    \begin{align}
    \label{eq:LHS_vs_RHS}
        \vert \SSS_M \vert = 2^\ell \vert \langle \bigcup_{N \subsetneq M} \SSS_N \rangle \vert,
    \end{align}
    where $\ell \in \lbrace 0, 1, 2 \rbrace$. Furthermore, if \cref{eq:LHS_vs_RHS} is satisfied for $\ell=2$, then the size of $M$ is an even number and  we have $\abs{\SSS_{M}} = 4$.

    \begin{figure}
        \centering
        \includegraphics[width = 0.3\textwidth]{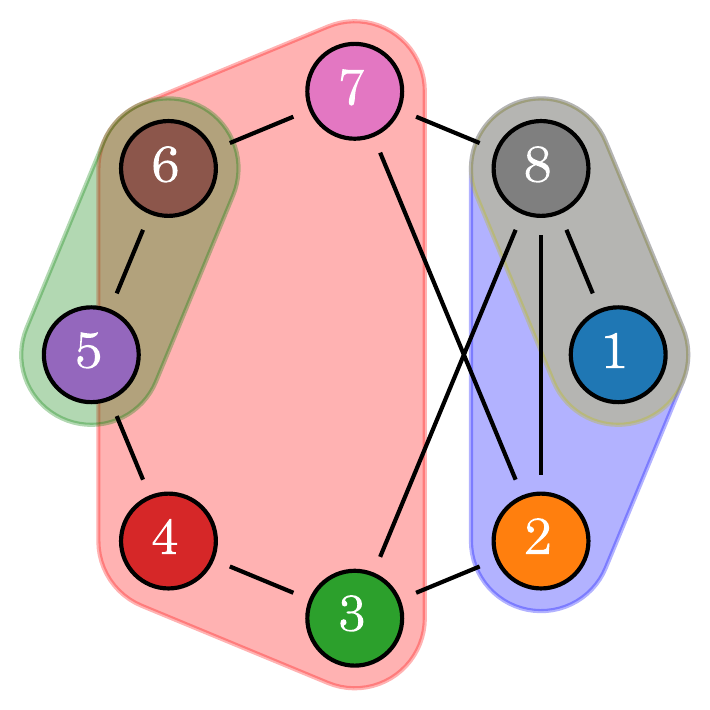} 
        \caption{A graph with subsets $M$ such that $\ell$ in \cref{eq:LHS_vs_RHS} takes values 0, 1, and 2. We can find many sets $M_0$ for which $\ell = 0$, for example $M_0 = \{5,6 \}$ for which both sides of \cref{eq:LHS_vs_RHS} are equal to 1. The set $M_{1}^{\text{even}} = \{1,8\}$ has a stabilizer of 2 elements. All proper subsets of $M_{1}^{\text{even}}$ have only one stabilizer element which is the identity  and therefore we get $\ell = 1$. For $M_{1}^{\text{odd}} = \{1,2,8\}$ we get $\ell = 1$ as well since the stabilizers $ \SSS_{\{ 1,2,8\}} = \{ \1, X_1 Z_8, - Z_1 Y_2 Y_8, - Y_1 Y_2 X_8   \}$, $ \SSS_{\{ 1, 8\}} = \{ \1, X_1 Z_8   \}$ and $ \SSS_N =\{ \1 \}$ for all other proper subsets $N \subsetneq M_{1}^{\text{odd}}$. An example where $\ell = 2$ is shown on the marginal set $M_2 = \{ 3,4,6,7 \}$. We have $\SSS_{\{ 3,4,6,7\}} = \{\1, X_3 Z_4 Z_6 X_7, Z_3 X_4 X_6 Z_7, Y_3 Y_4 Y_6 Y_7 \}$ and $\SSS_N = \{ \1 \}$ for all subsets $N \subsetneq M_2$.}
        \label{fig:example_interestingmarginals}
    \end{figure}

  \begin{proof}[Proof of \cref{Lemma:LHS_vs_RHS}]
     Notice that both sets $\langle \bigcup_{N \subsetneq M} \SSS_N \rangle $ and $\SSS_M $ are subgroups of the Pauli group and hence their order is a power of two. 
     Moreover, $\langle \bigcup_{N \subsetneq M} \SSS_N \rangle $ is a subgroup of $\SSS_M $, hence $\vert \SSS_M \vert = 2^\ell \vert \langle \bigcup_{N \subsetneq M} \SSS_N \rangle \vert$ for some natural number $\ell\in \mathbb{N}_0$. 

     We show in \cref{fig:example_interestingmarginals} by example that $\ell$ can take all values in $\lbrace 0, 1, 2 \rbrace$. We first show that $\ell \geqslant 3$ is not possible in general. We continue by proving  conditions on $M$ and $\SSS_M$ for $\ell = 2$. We proof all statements by contradiction.

     We now show that  $\ell \geqslant 3$ is impossible in general.  First, assume $\ell \geqslant 1$. Therefore, there is a Pauli string $\sigma_{1} \in \SSS_{M}$ for which $\sigma_{1} \not \in \langle \bigcup_{N \subsetneq M} \SSS_N \rangle$ and thus $\sigma_{1}$ has full support on the marginal $M$. 
     The set $\sigma_{1} \langle \bigcup_{N \subsetneq M} \SSS_N \rangle $ forms a coset of the subgroup $\langle \bigcup_{N \subsetneq M} \SSS_N \rangle  \in \SSS_{M}$.
     Now assume that $\ell \geqslant 2$. This implies the existence of some Pauli string $\sigma_{2} \in \SSS_{M}$ that is neither contained in $\langle \bigcup_{N \subsetneq M} \SSS_N \rangle $ nor in $\sigma_{1}\langle \bigcup_{N \subsetneq M} \SSS_N \rangle$ (i.e. the coset of $\sigma_{1}$).
     In particular, this implies the following: First, let $\sigma_{1} = \bigotimes_{i \in V} P_{i}$ and $\sigma_{2} = \bigotimes_{i \in V} Q_{i}$ for $P_{i}, Q_{i} \in \{X, Y, Z\}$. The above implies that $Q_{i} \not = P_{i}$ for every $i \in V$, for otherwise we would have $\sigma_{1} \sigma_{2} \in \SSS_N$ for some $N \subsetneq M$, which is a contradiction.

     Finally, assume $\ell \geqslant 3$. By the same reasoning as before, this implies the existence of a Pauli string $\sigma_{3} \in \SSS_{M}$ that is not in the sets $\langle \bigcup_{N \subsetneq M} \SSS_N \rangle  \in \SSS_{M}$, $\sigma_{1} \langle \bigcup_{N \subsetneq M} \SSS_N \rangle  \in \SSS_{M}$, or $\sigma_{2}\langle \bigcup_{N \subsetneq M} \SSS_N \rangle  \in \SSS_{M}$. 
     Similarly as before, $\sigma_{3}$ has full support on $M$ and has to be different from $\sigma_1$ and $\sigma_2$ on every position. Because of product relations of Pauli matrices, it follows that $\sigma_{3} = \pm \sigma_{1}\sigma_{2}$,  which is in contradiction with the assumption that $\sigma_{1}, \sigma_{2}$, and $\sigma_{3}$ are independent. Therefore $\ell =3$ is not possible.

     We continue by showing that if $\abs{M}$ is odd, we cannot have $\ell = 2$.
     We see that because of commutation relations of the Paulis, if $\abs{M}$ is odd, we have $\sigma_{1}\sigma_{2} = - \sigma_{2}\sigma_{1}$  which contradicts the Abelian structure of $\SSS_{M}$. Therefore for odd sets $M$, we cannot find different stabilizers $\sigma_1$ and  $\sigma_2$ with full support on $M$ and therefore $\ell=2$ is not possible.

    We finally show that if $\ell = 2$, we have $\langle \bigcup_{N \subsetneq M} \SSS_N \rangle = \{ \1 \}$.
    Suppose that there exists an element $\tau \in \langle \bigcup_{N \subsetneq M} \SSS_N \rangle$ where $\tau \not = I$. As stated before, when $\ell = 2$ there exist elements $\sigma_{1}, \sigma_{2}$ and $\sigma_{3} = \sigma_{1} \sigma_{2}$ that have 
    all different Pauli operators on all nodes. Since $\tau \not = I$, there is at least one node on which $\tau$ has non-trivial support. But then, at least $\tau \sigma_{1}$, $\tau \sigma_{2}$, or $\tau \sigma_{3}$ does not have full support; w.l.o.g. assume this is so for $\tau \sigma_{1}$. However, then $\tau$ and $\tau \sigma_{1}$ are both in $\langle \bigcup_{N \subsetneq M} \SSS_N \rangle$, which is in direct contradiction with the fact that $\sigma_{1} = \tau^2 \sigma_{1} \not \in \langle \bigcup_{N \subsetneq M} \SSS_N \rangle$. Therefore, no such $\tau$ can exist and it must be that $\langle \bigcup_{N \subsetneq M} \SSS_N \rangle = \{ \1 \}$.

\end{proof}

\section{Condensation Rules} \label{sec:proof_condens}

\subsection{Proof of Lemma $15$}

In this section, we prove \cref{lem:condens_big} which states that given two graphs $G$ and $G'$ and a condensation set $C$ such that $\stabdim{C} = \vert C \vert - 1$ then,  if 
    $G$ and $G'$ are LC-equivalent, it follows that $G_c$ and $G'_c$ are LC-equivalent.
We first introduce notation. 

Consider a graph $G = (V,E)$ and a set $C \subsetneq V$ such that $\stabdim{C} = \vert C \vert - 1$. We earlier defined the neighborhood of a set $\NN{C}$ as the set of nodes adjacent to nodes in $C$. 
In \cref{def:metagraph}, we defined metagraphs. For a graph $G$ and a set $C$, we defined the metagraph  $G_C$ which has two types of nodes. Type-1 nodes are those from set $C$, while type-2 nodes represent some nodes outside of $C$ and tell whether they are connected to a certain subset $B$ of $C$.   Here, we 
are interested in the set of nodes in $V \setminus C$ which get represented by type-2 nodes. We denote this set  $\NNh{B}$ and define it such that all nodes in $B \in  \mathcal{P}(C) \setminus \emptyset $ are adjacent with all nodes in  $\NNh{B}$ and $B$ is the largest subset of $C$ for which this is true:
\begin{align}
    \NNh{B}
    \coloneqq \left\{ v \in V \setminus C \, \middle| \, \begin{array}{l}
    \forall b \in B: \, (v,b) \in E \text{ and} \\
    \forall c \in C \setminus B: \, (v,c) \notin E
    \end{array} \right\}.
\end{align}
We call such subset of the complete neighborhood a neighborset of $B$.
By definition, we have
\begin{align}
    \NN{C} = \dot \bigcup_{B \in \mathcal{P}(C) \setminus \emptyset} \NNh{B}.
\end{align}
Note that for most $B \subseteq C$, $\NNh{B} = \emptyset$. We count how many sets $\NNh{B}$ are non-empty and call this number the amount of non-empty neighborhoods of $C$.

The proof is divided into two parts. 
(i) We first show that every  set $C$ with $\stabdim{C} = \vert C \vert - 1$ has exactly one non-empty neighborhood. That is, there is a $B \subseteq C$, $B \neq \emptyset$ such that $\NN{C} = \NNh{B}$ if and only if $\stabdim{C} = \vert C \vert - 1$.
(ii) We then show that for two graphs which are LC-equivalent also the condensed graphs are LC-equivalent. 

\begin{proof}
(i) \underline{$\exists B \subseteq C$, $B \neq \emptyset$: $\NN{C} = \NNh{B} \Leftrightarrow \stabdim{C} = \vert C \vert - 1$}.

\underline{$\Rightarrow$} Consider a set $C \subseteq V$ with a neighborhood $\NN{C}$ and a set $B \subseteq C$ with $B \neq \emptyset$ such that $\NNh{B} = \NN{C}$. We count the number of elements in the stabilizer set $\SSS_C$ and compute $\stabdim{C}$ from this number. The stabilizer set is a subset of the set generated by all generators $g_i$ for $i\in C$:
\begin{align}
    \SSS_C \subseteq \langle \lbrace g_i \mid i \in C \rbrace \rangle, 
\end{align}
and therefore the number of elements in $\SSS_C$ is upper bounded by $2^{\vert C \vert}$.

Consider the case where $B = \lbrace b \rbrace$, that is, $B$ contains only one element. We have that $\tr_{V\setminus B}(g_c) = g_c$ for all $c \in C \setminus B$. Therefore $\langle g_c \mid c \in C \setminus B \rangle \subseteq \SSS_C$ and therefore $\SSS_C$ has at least $2^{\vert C \vert - 1}$ elements.   Further  we have $\tr_{V\setminus B}(g_b) = 0$ and therefore $g_b \notin \SSS_C$. Also all products containing $g_b$ are not in $\SSS_b$, which are $2^{\vert C \vert - 1}$ terms. Therefore $\SSS_C$ has at most $2^{\vert C \vert} - 2^{\vert C \vert - 1} = 2^{\vert C \vert - 1}$ elements. It follows that $\SSS_C$ has exactly $2^{\vert C \vert - 1}$ elements, which corresponds to $\stabdim{C} = \vert C \vert - 1$.

For $\vert B \vert > 1$ a similar argumentation can be made. We have that $\tr_{V\setminus B}(g_c) = g_c$ for all $c \in C \setminus B$, which are $\vert C \vert - \vert B \vert$ elements. Also $g_b g_{b'} \in \SSS_C$ for all $b, b' \in B$, $b \neq b'$, which are $\vert B \vert - 1$ (algebraically) independent elements.  Therefore $\SSS_C$ has at least $2^{(\vert C \vert - \vert B \vert) + (\vert B \vert - 1)} = 2^{\vert C \vert - 1}$ elements. Further, all products of generators which contain an odd number of generators $g_b$, $b \in B$ are not in $\SSS_C$. This are  $2^{\vert C \vert - 1}$ elements. Therefore $\SSS_C$ contains exactly $2^{\vert C \vert - 1}$ elements, which corresponds to $\stabdim{C} = \vert C \vert - 1$.

\underline{$\Leftarrow$} Proof by contradiction. Assume that $C$ has more than one non-empty neighborhood. That is there are two sets $B_1, B_2 \in C$, both not empty and  $B_1 \neq B_2$ which have a non-empty neighborhood:  $ \NNh{B_1} \dot\cup \NNh{B_2} \subseteq \NN{C}$. This is the case if $C$ has at least two non-empty neighborhoods. There are three relevant relations between $B_1$ and $B_2$. We show that in all cases $\stabdim{C} < \vert C \vert - 1$.

\begin{enumerate}[a)]
    \item \underline{$B_1 \cap B_2 = \emptyset$}: All products of stabilizer elements which contain odd numbers of generators $g_b$, $b\in B_1$ are not in $\SSS_C$, which are $2^{\vert C \vert - 1}$ elements. There is at least one more product of generators which contains an odd number of generators $g_b$, $b \in B_2$. Therefore $\vert \SSS_C \vert \leqslant 2^{\vert C \vert - 1} - 1$ and therefore $\stabdim{C} < \vert C \vert - 1$.
    \item \underline{$B_1 \subsetneq B_2$}: ($B_2 \subsetneq B_1$ equivalent.) Same argumentation as in (a) for the disjoint sets $B_1$ and $B_2 \setminus B_1$.
    \item \underline{$B_1 \cap B_2 \neq \emptyset, B_1 \nsubseteq B_2, B_2 \nsubseteq B_1$}: Same argumentation as in (a) for the disjoint sets $B_1 \setminus B_2$ and $B_2 \setminus B_1$.
\end{enumerate}

Therefore, if $C$ has more than one non-empty neighborhood, we have $\stabdim{C} < \vert C \vert - 1$.

It follows that it is equivalent that $C$ has exactly one non-empty neighborhood and $\stabdim{C} = \vert C \vert - 1$.

(ii) We show that for two graphs $G$ and $G'$ which are LC-equivalent the condensed graphs $G_C$ and $G'_C$ are also equivalent where $C$ is chosen such that $\stabdim{C} = \vert C \vert - 1$. Without loss of generality we assume that there is a $j \in V$ such that $G' = \LC_j (G)$, otherwise the argumentation has to be applied multiple times.

We divide the node set $V$ into four disjoint sets:
\begin{equation}
V = B \mathbin{\dot{\cup}} (C \setminus B) \mathbin{\dot{\cup}} \NN{C} \mathbin{\dot{\cup}} (V \setminus (C \cup \NN{C})).
\end{equation}

As shown in (i), for the chosen set $C$ we have $\NN{C} = \NNh{B}$.
We show that for local complementation on any node in one of the sets, the statement is true.

\begin{enumerate}[a)]
    \item \underline{$j \in V \setminus ( C \cup \NN{C} )$}: Local complementation on a node which is not adjacent to $C$, does not change the structure of the graph in the set $C$ and therefore has the same effect on a graph before and after condensation. That is $G'_C = \LC_j (G_C)$.
    \item \underline{$j \in C\setminus B$}: Local complementation can only change the connectivity structure in $C$, what we do not see after condensation. That is $G'_C = G_C$.
    \item \underline{$j \in B$}: Local complementation can change the edge structure of $C$, of $\NN{C}$ as well as the set $B$ to another set $B' \neq \emptyset$. Changes in $C$ have no effect. Changes in $\NN{C}$ are the same as local complementation on node $c$ on $G_C$. The change of $B$ is such that $\NNh{B} = \NNh{B'}$, that is $C$ still has exactly one non-empty neighborhood which is the same as before. That is $G'_C = \LC_c (G_C)$.
    \item \underline{$j \in \NN{C}$}: Local complementation can change the edge structure in $C$, which does not matter for condensation. It further can change the neighborhood of $C$ in the same way as it changes the neighborhood of $c$ in the condensed graph. It can also change other edges which are not affected by condensation in $C$. That is $G'_C = \LC_j (G_C)$. 
\end{enumerate}
Therefore in all cases $G_C$ and $G'_C$ are LC-equivalent after condensation.

\end{proof}

\subsection{Not Working Condensation Rules}
Below we give a list of condensation rules which are not sufficient to help deciding LC-equivalence of the initial graphs.  
\begin{Observation} \label{obs:not-condense}
    We give a list of condensation rules which preserve LC-equivalence in some cases but do not hold in full generality.
    \begin{enumerate}
        \item It is not correct that a set $C$ can be condensed whenever $\vert \SSS_C \vert > \vert \langle \bigcup_{B \subsetneq C} \SSS_B \rangle \vert$ is fulfilled.
        \item A condensation set $C$ with a neighborhood which can get composed into neighborsets which are not adjacent by a path outside $C$ cannot be condensed in general. 
        \item It is not true that a set $C$ such that the neighborhood can get composed into non-empty neighborsets $\NNh{D}$ for some $D \in \DD$ where all $D \in \DD$ are pairwise disjoint can be condensed in general. $\DD$ is the set of $D \subseteq C$ for which the neighborset is not empty. 
    \end{enumerate}
\end{Observation}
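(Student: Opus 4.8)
The statement collects three negative results, so the plan is to prove each item by exhibiting an \emph{explicit counterexample}: a pair of graphs $G$ and $G'$ that are LC-equivalent — say $G' = \LC_{j_m}\cdots\LC_{j_1}(G)$ for a short sequence of local complementations — together with a condensation set $C$ satisfying the hypothesis of that item, for which the condensed graphs $G_C$ and $G'_C$ are \emph{not} LC-equivalent. Such a triple is exactly the obstruction one needs: if the candidate rule held, then, just as in the contrapositive of \cref{lem:condens_big}, the inequivalence of $G_C$ and $G'_C$ would force the (false) conclusion that $G \not\sim_{\mathrm{LC}} G'$.

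To certify LC-inequivalence of the condensed graphs I would fall back on the invariants of \cref{sec:rankinvar}: if the marginal rank tensors $\ranktensor{G_C}{k}$ and $\ranktensor{G'_C}{k}$ (or, comparing as unlabelled graphs, the lists $\ranklist{}{k}$) already differ for some small $k$, we are done by \cref{lem:samerank}; if they coincide, one runs Bouchet's algorithm \cite{bouchet1993Recognizing}, which decides LC-equivalence exactly. Since $G_C$ and $G'_C$ have strictly fewer vertices than $G$, both checks are cheap for the small graphs involved.

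The examples themselves are guided by part~(i) of the proof of \cref{lem:condens_big}: $\stabdim{C} = \vert C\vert - 1$ is equivalent to $C$ having exactly one non-empty neighbourset $\NNh{B}$, and all three items relax precisely this condition, so the counterexamples should have a set $C$ with two (or more) non-empty neighboursets that nonetheless meets the weaker hypothesis. For item~1, take a graph containing a three-node set $C$ carrying a ``new'' stabiliser, as in \cref{fig:example_interesting_marginals}(b), so that $\vert\SSS_C\vert = 2 > 1 = \vert\langle\bigcup_{N\subsetneq C}\SSS_N\rangle\vert$ (the case $\ell \geq 1$ of \cref{Lemma:LHS_vs_RHS}) while $\stabdim{C} = 1 < 2 = \vert C\vert - 1$; condensing $C$ merges its two neighboursets, and a local complementation acting on only one of them produces condensed graphs whose marginal structure differs. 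For items~2 and~3, take $C$ with two non-empty neighboursets $\NNh{D_1}$, $\NNh{D_2}$ with $D_1\cap D_2 = \emptyset$ (item~3) and, in addition, with no path outside $C$ joining $\NNh{D_1}$ to $\NNh{D_2}$ (item~2); a local complementation at a vertex of one neighbourset then acts asymmetrically on the pair, and one checks the asymmetry is not erased by collapsing $C$ to a single vertex.

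The main obstacle is that there is no uniform argument: each item needs its own hand-built graph, and the hypotheses are intricate enough (``exactly these neighboursets non-empty and pairwise disjoint'', ``no connecting path outside $C$'') that the examples have to be engineered with some care — small enough to display, but with enough slack for condensation to destroy the relevant information. Once an example is fixed, checking LC-inequivalence of the condensed pair is routine given \cref{sec:rankinvar} and \cite{bouchet1993Recognizing}, but it is the step that must be carried out concretely rather than argued abstractly.
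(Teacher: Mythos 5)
Your proposal takes essentially the same route as the paper: the paper's entire proof of this Observation consists of exhibiting the two explicit counterexample graphs in \cref{fig:counterex_condense}, with (a) refuting item~1 and (b) simultaneously refuting items~2 and~3 (a set $C$ with $\stabdim{C}=1$, \cref{eq:threecondset} fulfilled, and two disjoint neighboursets not joined by any path outside $C$), which is exactly the strategy you outline, down to certifying inequivalence of the condensed graphs by marginal invariants or Bouchet's algorithm. The only substantive difference is that you stop at a blueprint, whereas the concrete graphs are the whole content of the paper's argument and would still need to be written down and verified.
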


\begin{figure}
    \centering
    (a)
    \includegraphics[width=.4\linewidth]{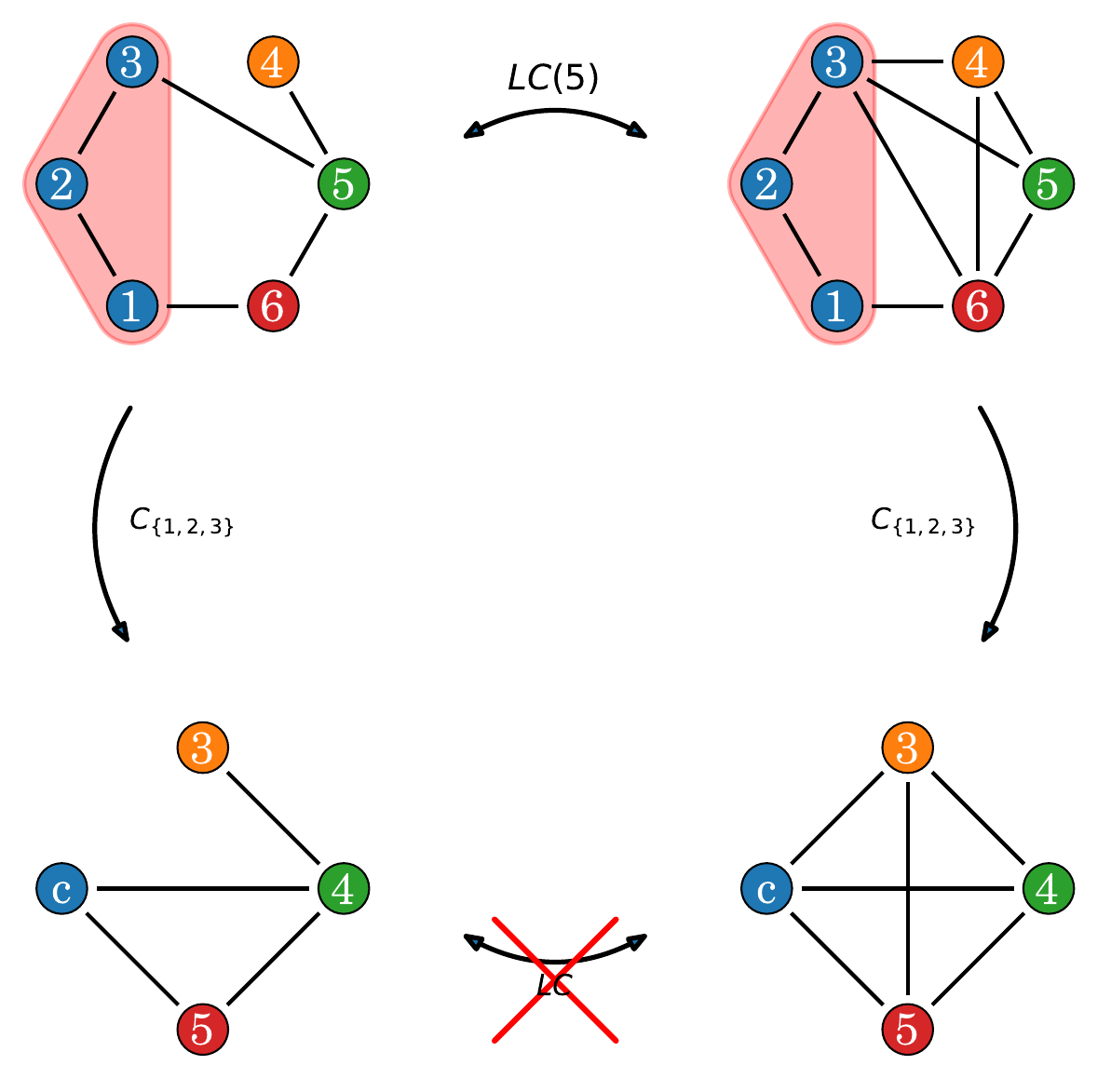}
    (b)
    \includegraphics[width=.4\linewidth]{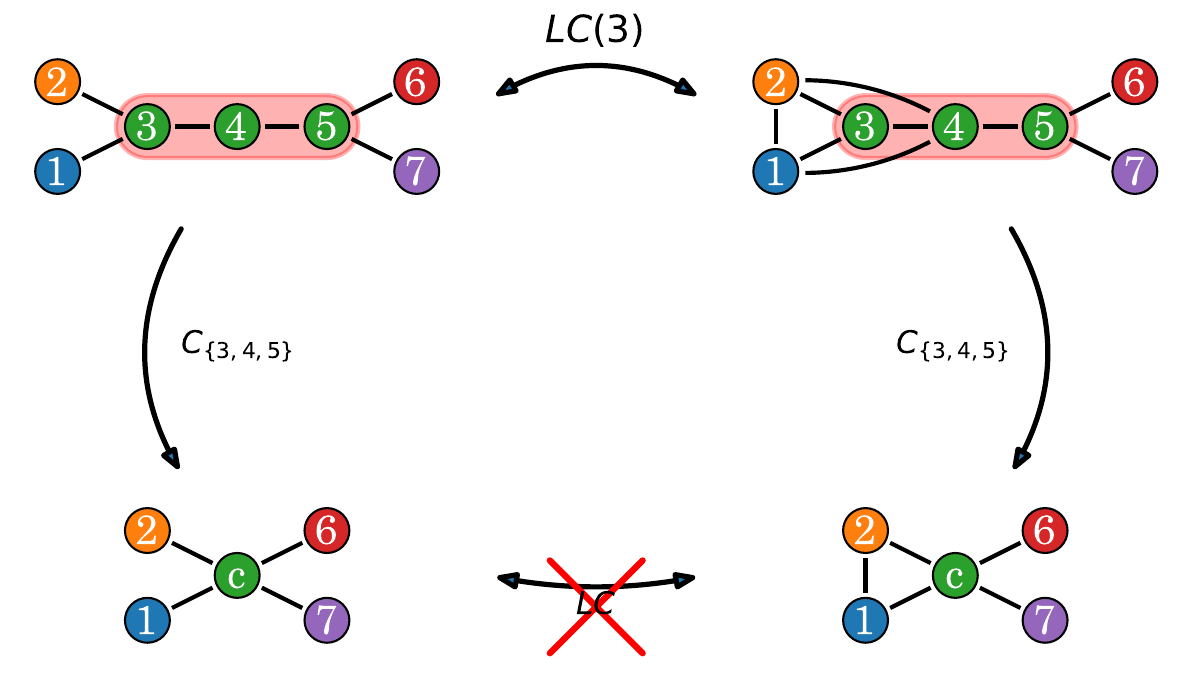}
    \caption{Counterexamples for not working condensation rules, presented in \cref{obs:not-condense}. In both (a) and (b) we chose a condensation set $C$ such that $\stabdim{C} = 1$ and \cref{eq:threecondset} is fulfilled. In (b) we have additionally that the neighborsets $\NNh{3} = \lbrace 1,2 \rbrace$, $\NNh{5} = \lbrace 6,7 \rbrace$ are not adjacent by a path outside $C$ and that $\lbrace 3 \rbrace$ and $\lbrace 5 \rbrace$ are disjoint.}
    \label{fig:counterex_condense}
\end{figure}

Counterexamples are shown in \cref{fig:counterex_condense}.

\bibliographystyle{apsrev4-2}
\bibliography{main}

\end{document}